\DeclareMathAlphabet{\mathpzc}{OT1}{pzc}{m}{it}
\numberwithin{equation}{section}
\newcommand{\ii}{{\rm i}}
\newcommand{\ee}{{\rm e}}
\newcommand{\x}{{\rm x}}
\newcommand{\vol}{{\rm vol}}
\newcommand{\dd}{{\rm d}}
\newcommand{\R}{\mathscr{R}}
\DeclareMathOperator{\supp}{supp}
\newcommand{\beq}{\begin{equation}}
\newcommand{\ene}{\end{equation}}
\newcommand{\g}{{\sf g}}
\newtheorem{thm}{Theorem}
\newtheorem{lemma}[thm]{Lemma}
\newtheorem{prop}[thm]{Proposition}
\theoremstyle{definition}
\newtheorem{defn}[thm]{Definition}
\newtheorem{ex}[thm]{Example}
\begin{document}

\title{Semiclassical gravity with a conformally covariant field in globally hyperbolic spacetimes}
\author{Benito A. Ju\'arez-Aubry$^1${}\thanks{benito.juarez@correo.nucleares.unam.mx}\,}
\author{Sujoy K. Modak$^2${}\thanks{smodak@ucol.mx} }
\affil{$^1$Departamento de Gravitaci\'on y Teor\'ia de Campos, Instituto de Investigaciones Nucleares, Universidad Nacional Aut\'onoma de M\'exico, A. Postal 20-126, CDMX, M\'exico}
\affil{$^2$Facultad de Ciencias, CUICBAS, Universidad de Colima, Colima, C.P. 28045,  M\'exico}
\date{\daymonthyear\today}

\maketitle

\begin{abstract}
%
We prove that semiclassical gravity in conformally static, globally hyperbolic spacetimes with a massless, conformally coupled Klein-Gordon field is well posed, when viewed as a coupled theory for the dynamical conformal factor of the metric and the Klein-Gordon theory. Namely, it admits unique and stable solutions whenever constrained fourth-order initial data for the conformal factor and suitably defined Hadamard initial data for the Klein-Gordon state are provided on a spacelike Cauchy surface. As no spacetime symmetries are imposed on the conformal factor, the present result implies that, provided constrained initial data exists, there also exist exact solutions to the semiclassical gravity equations beyond the isotropic, homogeneous or static cases.
\end{abstract}


\section{Introduction}

The semiclassical gravity equations describe the interacting dynamics between the gravitational field and quantum matter fields. The quantum fields, which propagate in spacetime, gravitate via the expectation value of their stress-energy tensor, $\omega(T_{ab})$, which sources the semiclassical Einstein field equations that describe the dynamics of the metric tensor of spacetime, $g_{ab}$. Namely, we have
\begin{align}
G_{ab} + \Lambda g_{ab} = 8 \pi G_{\rm N} \omega(T_{ab}),
\end{align}
together with the matter fields equations of motion. 

Semiclassical gravity is believed to be relevant as the semiclassical regime of quantum gravity, sufficiently far away from Planck scale, in settings in which both quantum and gravitational effects cannot be neglected, such as cosmology or in black-hole physics. For example, the early stages of black-hole evaporation are believed to be correctly described by semiclassical gravity \cite{Hawking, WaldQFT}. Many aspects of semiclassical gravity are discussed in \cite{Ford}, where the reader can also find a number of references on the subject.

Having said so, the semiclassical gravity equations are complicated and a full mathematical understanding of the theory is lacking. It is currently unknown whether theorems of existence, uniqueness or stability of solutions hold for full semiclassical gravity, even in the cases in which only a free field, such as the Klein-Gordon scalar, appears in the matter sector. The central motivation for this article is advancing our understanding of the mathematical properties of the theory in globally hyperbolic spacetimes in the case in which a massless, conformally coupled Klein-Gordon field appears in the quantum matter sector.

We should mention that steady progress in characterising exact semiclassical gravity solutions has been made since the second half of the last century, especially in cosmological situations. Combinations of numerical and analytic techniques have been used to study cosmological spacetimes in \cite{Starobinsky, Anderson1, Anderson2, Anderson3, Anderson4} and for the particular case of de Sitter spacetime in \cite{Juarez-Aubry:2019jbw}. On the front of exact solutions, results on the existence, uniqueness and stability of solutions in cosmology have been obtained in \cite{Wald:1978ce, Dappiaggi:2008mm, Pinamonti:2010is, Pinamonti:2013wya, Meda:2020smb, Gottschalk:2018kqt}. Semiclassical gravity in static spacetimes has been studied in \cite{Sanders:2020osl, Benito}. 

From a perturbation theory viewpoint, the principle of perturbative agreement for physical solutions in higher-order theories, introduced by Simon in \cite{Simon}, has been applied to semiclassical gravity in \cite{Parker-Simon} and further advocated in \cite{Flanagan-Wald}. \cite{JAMS} studies the existence and uniqueness of solutions from a perturbative point of view as a perturbative initial value problem in a semiclassical scalars toy model inspired by semiclassical gravity. To leading order, the perturbative semiclassical gravity approach corresponds to what is known in the literature as semiclassical backreaction. Interesting and recent applications in the backreaction setting related to strong cosmic censorship appear in \cite{Juarez-Aubry:2015dla, Casals:2016ioo, Casals:2016odj, Casals:2019jfo, Hollands:2019whz, Hollands:2020qpe}. A version of the classical singularity theorems of Hawking and Penrose in semiclassical gravity has been studied very recently in \cite{Fewster:2021mmz}.



In this paper we show that there exist solutions to semiclassical gravity beyond the isotropic, homogeneous or static cases. In particular, we show that in the case of a conformally covariant Klein-Gordon field there exist local space- and time-dependent solutions to the theory consisting of a conformally static spacetime with the conformal Klein-Gordon field in a Hadamard state. This result appears in theorem \ref{MainThm2}. 

The main point is that the theory can be seen as one for the dynamical conformal factor of a conformally-static metric, given appropriate constrained initial data for the conformal factor on a spacelike Cauchy surface. The evolution in a neighbourhood of the Cauchy surface is then shown to be unique and stable in the sense that a small perturbation of data yields a small perturbation for the solution. A key element is that in conformally static spacetimes it is possible to have a {\it bona fide} notion of Hadamard initial data using some of the results in \cite{Benito} for static spacetimes, which allows to prescribe in principle reasonable initial data for semiclassical gravity in this context -- a task that is highly non-trivial in general.

The organisation of the paper is the following. In sec. \ref{sec:prel} we briefly introduce the theory of semiclassical gravity, which also serves the purpose of fixing our notation, including a discussion on the initial data and constraints of the theory. In preparation for the methods required in the proof of theorem \ref{MainThm2}, sec. \ref{sec:Stress-Energy-Conf} contains a number of results for conformally related (quantum and classical) field theories. In particular, it is shown that given a Hadamard state for a Klein-Gordon field in a fixed spacetime, a conformal state defined in a conformally related spacetime is Hadamard for the conformally related Klein-Gordon theory. Thus, knowing the Hadamard singular structure in one of the theories allows one to know the singular structure in the other. Sec. \ref{sec:SemiclassGrav} contains the main result of the article announced above. Final remarks and some perspectives appear in sec. \ref{sec:final}.

\section{Semiclassical gravity preliminaries}
\label{sec:prel}

Here we consider for the matter sector of semiclassical gravity a Klein-Gordon field. The Klein-Gordon algebra in the curved spacetime $(M, g_{ab})$, $\mathscr{A}(M)$, is the unital $\star$-algebra generated by smeared fields of the form $\Phi(f)$, where $f \in C_0^\infty(M)$, which satisfy the following relations: (i) Linearity: $f \mapsto \Phi(f)$ is linear. (ii) Hermiticity: $\Phi(f)^\star = \Phi(\overline{f})$. (iii) Field equation: $\Phi((\Box - m^2 - \xi R) f) = 0$, where $m^2\geq 0$ is the mass parameter and $\xi\in \mathbb{R}$ the curvature coupling of the field. (iv) Commutation relations $[\Phi(f), \Phi(g)] = -\ii (f,\mathcal{E} g) 1\!\!1$, where $g \in C_0^\infty(M)$, $\mathcal{E}$ is the causal propagator of $(\Box - m^2 - \xi R)$, the pairing is in $L^2(M, \dd \vol_g)$ and $1\!\!1$ is the algebra unit element.

States for the Klein-Gordon field are linear maps $\omega: \mathscr{A}(M) \to \mathbb{C}$ that are positive and normalised, i.e., for $a \in \mathscr{A}(M)$, $\omega(a a^\star) \geq 0$ and $\omega(1\!\!1) = 1$. For free theories, the expectation value of the stress-energy tensor can be defined for Hadamard states. These are states for which the integral kernel of the Wightman two-point function, $\omega(\Phi(f) \Phi(g))$, which we denote by $G^+$, takes the following form in a convex normal neighbourhood,
\begin{subequations}
\begin{align}
G^+(\x, \x') & := \omega(\Phi(\x) \Phi(\x')) = H_\ell(\x, \x') +  w_{{\ell} }(\x, \x'),
\label{HadamardCondition} \\
H_\ell(\x, \x') & := \frac{1}{2 (2 \pi)^2} \left( \frac{\Delta^{1/2}(\x, \x')}{\sigma_\epsilon(\x, \x')} +  v(\x, \x') \ln \left(\frac{\sigma_\epsilon(\x, \x')}{\ell^2} \right)  \right) \label{HadamardBiDis}
\end{align}
\end{subequations}
Here $H_\ell$ is known as the Hadamard bi-distribution, $\sigma(\x, \x')$ is half of the squared geodesic distance between the spacetime points $\x$ and $\x'$ and $\sigma_\epsilon(\x, \x') := \sigma(\x, \x') + \ii \epsilon (t(\x)-t(\x') ) + \frac{1}{2}\epsilon^2$ is its regularised version with $t$ an arbitrary time function, $\Delta$ is the van Vleck-Morette determinant, $v$ and $w$ are smooth and symmetric coefficients and $\ell \in \mathbb{R}$ is an arbitrary length scale. The coefficient $v$ admits an asymptotic covariant Taylor series of the form
\begin{align}
v(\x, \x') = \sum_{n = 0}^\infty v_n(\x, \x') \sigma^n(\x, \x'), \label{vHadamard}
\end{align}
where the coefficients $v_n$ obey the so-called Hadamard recursion relations. Note that whenever $G^+$ has Hadamard form its singular structure is fully characterised by $H_\ell$. In particular, $H_\ell$ contains distributional singularities whenever the spacetime points in its argument are connected by a null curve. 

We henceforth consider the arbitrary length-scale $\ell$ to be fixed, and define the expectation value of the renormalised stress-energy tensor in the Hadamard state $\omega$ as 
\begin{subequations}
\label{TabDef}
\begin{align}
\omega &(T_{ab})  := \lim_{\x' \to \x} \Big(\mathcal{T}_{ab} w_\ell (\x, \x') + \frac{1}{ (2 \pi)^2} g_{ab} v_1(\x, \x')\Big) + \alpha_1 g_{ab} + \alpha_2 G_{ab} + \alpha_3 I_{ab} + \alpha_4 J_{ab} , \label{Stress-energyDef} \\
\mathcal{T}_{ab} &  := (1-2\xi ) g_{b}\,^{b'}\nabla_a \nabla_{b'} +\left(2\xi - \frac{1}{2}\right) g_{ab}g^{cd'} \nabla_c \nabla_{d'}  - \frac{1}{2} g_{ab} m^2 + 2\xi \Big[  - g_{a}\,^{a'} g_{b}\,^{b'} \nabla_{a'} \nabla_{b'} + g_{ab} g^{c d}\nabla_c \nabla_d + \frac{1}{2}G_{ab} \Big], \label{TabPointSplit2} \\
I_{ab} & := R_{;ab} - \frac{1}{2} g_{ab} \Box R - \Box R_{ab} + \frac{1}{2} g_{ab} R^{cd} R_{cd} - 2 R^{cd} R_{cdab},
 \label{Iab} \\
J_{ab} & := 2 R_{;ab} - 2 g_{ab} \Box R + \frac{1}{2}g_{ab} R^2 - 2 R R_{ab}, \label{Jab}
\end{align}
\end{subequations}
where $v_1$ is the $n = 1$ Hadamard coefficient in the asymptotic series of $v$, cf. eq. \eqref{vHadamard}, $g_a{}^{b'}(\x, \x')$ is the parallel-transport propagator between the points $\x$ and $\x'$, and $\alpha_i$, $i = 1, \ldots, 4$, are arbitrary real constants that appear as renormalisation ambiguities. 

We shall use extensively Synge's coincidence limit notation, whereby $[A](\x) := \lim_{\x' \to \x} A(\x, \x')$. Assuming that the state $\omega$ is quasi-free, and using eq. \eqref{TabDef}, the semiclassical gravity equations with a Klein-Gordon field read as
\begin{subequations}
\label{semiEFEKG}
\begin{align}
& G_{ab} + \Lambda g_{ab} = 8 \pi G_{\rm N}  [\mathcal{T}_{ab} w_\ell] + \frac{2  G_{\rm N}}{\pi} g_{ab} [v_1] - \alpha I_{ab} +  \beta J_{ab} \label{semiEFE}, \\
& (\Box_\x - m^2 - \xi R(\x)) G^+(\x, \x') = (\Box_{\x'} - m^2 - \xi R(\x')) G^+(\x, \x') = 0, \label{KG}
\end{align}
\end{subequations}
where the cosmological constant, $\Lambda$, and Newton's constant, $G_{\rm N}$, together with $\alpha$ and $\beta$ are redefinitions of the constants $\alpha_i$, ($i = 1, \ldots, 4$) in eq. \eqref{Stress-energyDef} and of some ``bare" cosmological and Newton's constants, and where
\begin{align}
[v_1] = \frac{1}{8} m^4 + \frac{1}{4} \left(\xi - \frac{1}{6}\right) m^2 R - \frac{1}{24}\left(\xi - \frac{1}{5} \right) \Box R   + \frac{1}{8} \left(\xi - \frac{1}{6} \right)^2 R^2 - \frac{1}{720} R_{ab} R^{ab} + \frac{1}{720} R_{abcd} R^{abcd}.
\label{v1} 
\end{align}

\subsection{The initial data of semiclassical gravity}

The problem defined by eq. \eqref{semiEFE} is of fourth order in the spacetime metric. Therefore, the statement of the Cauchy problem for the system \eqref{semiEFEKG} should have as data on the initial surface $\mathcal{C}$ for the spacetime metric
\begin{subequations}
\label{IVP}
\begin{align}
\label{gIVP}
g_{ab}(\x)|_\mathcal{C} = g^{(0)}_{ab}(\underline{x}), & \quad & \dot g_{ab}(\x)|_\mathcal{C} = g^{(1)}_{ab}(\underline{x}), & \quad & 
\ddot g_{ab}(\x)|_\mathcal{C} = g^{(2)}_{ab}(\underline{x}), & \quad & \dddot g_{ab}(\x)|_\mathcal{C} = g^{(3)}_{ab}(\underline{x})
\end{align}
where $\dot A := \partial_t A$ for some time function $t$ and $\underline{\x} \in \mathcal{C}$. The initial data for the state consists of the correlations of the 3-fields, $\varphi$, and 3-momenta, $\pi$, on $\mathcal{C}$ satisfying the canonical commutation relations on $\mathcal{C}$,
\begin{align}
G^+(\x,\x')|_\mathcal{C} = \omega(\varphi(\underline{\x}) \varphi(\underline{\x}')) = G^+_{\varphi\varphi}(\underline{\x}, \underline{\x}'), & \quad & \nabla_n G^+(\x,\x')|_\mathcal{C} = \omega(\pi(\underline{\x}) \varphi(\underline{\x}')) = G^+_{\pi \varphi}(\underline{\x}, \underline{\x}'), \nonumber \\
\nabla_{n'}G^+(\x,\x')|_\mathcal{C} = \omega(\varphi(\underline{\x}) \pi(\underline{\x}')) = G^+_{\varphi \pi}(\underline{\x}, \underline{\x}'), & \quad & \nabla_n \nabla_{n'} G^+(\x,\x')|_\mathcal{C} = \omega(\pi(\underline{\x}) \pi(\underline{\x}')) = G^+_{\pi \pi}(\underline{\x}, \underline{\x}'). \label{WightmanIVP}
\end{align}
\end{subequations}

The initial data \eqref{WightmanIVP} should also satisfy the positivity condition, such that the solution $G^+$ satisfy 
\begin{equation}
\omega( \Phi(f) \Phi(\overline{f})) = \int_{M \times M} \dd \vol(\x) \dd \vol(\x') G^+(\x, \x') f(\x) \overline{f(\x')} \geq 0, 
\label{CovPositivity}
\end{equation}
which imposes on initial data that, given $u, v \in C_0^\infty(\mathcal{C})$, it holds that
\begin{align}
\int_{\mathcal{C} \times \mathcal{C}} \dd \vol_\mathcal{C}(\underline{\x}) \dd \vol_\mathcal{C}(\underline{\x}') \left[ G^+_{\varphi\varphi}(\underline{\x}, \underline{\x}') v(\underline{\x}) \overline{v(\underline{\x}')} - G^+_{\pi\varphi}(\underline{\x}, \underline{\x}') u(\underline{\x}) \overline{v(\underline{\x}')} - G^+_{\varphi\pi}(\underline{\x}, \underline{\x}') v(\underline{\x}) \overline{u(\underline{\x}')} + G^+_{\pi\pi}(\underline{\x}, \underline{\x}') u(\underline{\x}) \overline{u(\underline{\x}')} \right] \geq 0.
\label{Positivity}
\end{align}

For quasi-free states eq. \eqref{Positivity} fully characterises the positivity condition in globally hyperbolic spacetimes.

\begin{prop}
\label{Prop:Pos}
Let $(M, g_{ab})$ be a globally-hyperbolic spacetime, $\mathcal{C} \subset M$ a Cauchy surface. Let $G^+$ be a bi-solution to  the Klein-Gordon equation \eqref{KG} with initial data \eqref{WightmanIVP} defining a quasi-free state. The positivity condition \eqref{CovPositivity} is equivalent to the condition \eqref{Positivity}.
\end{prop}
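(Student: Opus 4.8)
The plan is to exploit that, because $G^+$ solves the Klein--Gordon equation \eqref{KG} in each of its arguments, the covariant smearing $\omega(\Phi(f)\Phi(\overline f))$ appearing in \eqref{CovPositivity} depends on $f \in C_0^\infty(M)$ only through the Cauchy data of the solution $\mathcal{E}f$ on $\mathcal{C}$, and to show that this reduction produces exactly the left-hand side of \eqref{Positivity}. The central tools are the causal propagator $\mathcal{E}$ and the Cauchy-data map
\[
\rho : C_0^\infty(M) \to C_0^\infty(\mathcal{C}) \oplus C_0^\infty(\mathcal{C}), \qquad \rho(f) := \left(\mathcal{E}f|_\mathcal{C}, \, \nabla_n \mathcal{E}f|_\mathcal{C}\right) =: (u_f, v_f).
\]
I note at the outset that, since $\omega$ is quasi-free, its full positivity $\omega(a a^\star) \geq 0$ is equivalent to positivity of the two-point function, i.e. to \eqref{CovPositivity}; this is what allows the positivity of the state to be phrased entirely in terms of $G^+$.

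The first and central step is to establish the identity
\begin{align}
\omega(\Phi(f)\Phi(\overline f)) = \int_{\mathcal{C} \times \mathcal{C}} \dd\vol_\mathcal{C}(\underline\x)\, \dd\vol_\mathcal{C}(\underline\x') \Big[ & G^+_{\varphi\varphi} v_f \overline{v_f} - G^+_{\pi\varphi} u_f \overline{v_f} \nonumber \\ & - G^+_{\varphi\pi} v_f \overline{u_f} + G^+_{\pi\pi} u_f \overline{u_f} \Big], \label{PosKeyId}
\end{align}
where the four kernels carry arguments $(\underline\x, \underline\x')$ and $u_f, v_f$ are evaluated at $\underline\x$ and their conjugates at $\underline\x'$. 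To obtain \eqref{PosKeyId} I would use the canonical identification $\Phi(f) = \pi(u_f) - \varphi(v_f)$, which follows from the Klein--Gordon symplectic relation $(h, \mathcal{E} f) = \int_\mathcal{C} \big(\nabla_n \mathcal{E}h \, \mathcal{E}f - \mathcal{E}h\, \nabla_n \mathcal{E}f\big)\, \dd\vol_\mathcal{C}$ (a Green-identity rewriting of the bulk pairing) together with the commutation relations in property (iv) and the canonical relations $[\varphi(\underline\x), \pi(\underline\x')] = \ii\, \delta_\mathcal{C}(\underline\x,\underline\x')\, \I$, the overall sign being fixed by consistency of both with $[\Phi(f),\Phi(g)] = -\ii (f,\mathcal{E}g)\I$. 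Substituting this identification into $\omega(\Phi(f)\Phi(\overline f))$, expanding the product, and reading off the four initial-data two-point functions from their definitions in \eqref{WightmanIVP} then yields \eqref{PosKeyId}.

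The equivalence follows in two short steps. For \eqref{Positivity} $\Rightarrow$ \eqref{CovPositivity}: given any $f \in C_0^\infty(M)$, the pair $(u_f, v_f) = \rho(f)$ lies in $C_0^\infty(\mathcal{C}) \oplus C_0^\infty(\mathcal{C})$, so inserting it into \eqref{Positivity} and invoking \eqref{PosKeyId} gives $\omega(\Phi(f)\Phi(\overline f)) \geq 0$; as $f$ is arbitrary, \eqref{CovPositivity} holds. For the converse \eqref{CovPositivity} $\Rightarrow$ \eqref{Positivity}, I would use that $\rho$ is onto $C_0^\infty(\mathcal{C}) \oplus C_0^\infty(\mathcal{C})$: given arbitrary $u, v \in C_0^\infty(\mathcal{C})$, the Cauchy problem for $P := \Box - m^2 - \xi R$ with data $(u,v)$ on $\mathcal{C}$ is well posed in the globally hyperbolic $(M,g_{ab})$ and admits a unique solution $\psi$ of spacelike-compact support (its support lying in $J(\supp u \cup \supp v)$); since the range of $\mathcal{E}$ is precisely the space of spacelike-compact solutions of $P$, we have $\psi = \mathcal{E}f$ for some $f \in C_0^\infty(M)$, whence $\rho(f) = (u,v)$. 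Then the left-hand side of \eqref{Positivity} equals $\omega(\Phi(f)\Phi(\overline f)) \geq 0$ by \eqref{PosKeyId} and \eqref{CovPositivity}, establishing \eqref{Positivity}.

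The step I expect to be the main obstacle is establishing \eqref{PosKeyId} with the correct signs, i.e. the Green-identity bookkeeping that recasts the covariant bulk pairing as a Cauchy-surface bilinear form, together with the companion surjectivity of $\rho$. Both are standard consequences of global hyperbolicity and of the structure of the causal propagator (as in the Cauchy-problem theory for normally hyperbolic operators), but matching the four coincidence data in \eqref{WightmanIVP} to the four terms of \eqref{Positivity} requires care with the orientation of $\mathcal{C}$ and with the sign conventions entering property (iv) and the canonical commutation relations.
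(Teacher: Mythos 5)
Your proposal is correct and follows essentially the same route as the paper's proof: both rest on the identity converting the covariantly smeared two-point function into the symplectically smeared (Cauchy-surface) bilinear form evaluated on the data $\bigl(\mathcal{E}f|_\mathcal{C}, \nabla_n\mathcal{E}f|_\mathcal{C}\bigr)$, combined with the surjectivity of the Cauchy-data map onto $C_0^\infty(\mathcal{C})\oplus C_0^\infty(\mathcal{C})$. The only difference is presentational: the paper cites this identity from the literature (eq.\ (3.11) of \cite{JAMS}) and treats the surjectivity implicitly, whereas you derive the identity from the Green-identity rewriting of $\Phi(f)$ and spell out the surjectivity argument explicitly.
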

\begin{proof}
The result follows from the equivalence between covariantly smeared and symplectically smeared two-point functions, see e.g. eq. (3.11) in \cite{JAMS}. Assume \eqref{CovPositivity} holds, then
\begin{align}
0 & \leq \int_{M \times M} \dd \vol(\x) \dd \vol(\x') G^+(\x, \x') f(\x) \overline{f(\x')}  = \int_{\mathcal{C} \times \mathcal{C}} \dd \vol_\mathcal{C}(\underline{\x}) \dd \vol_\mathcal{C}(\underline{\x}') \left[ G^+_{\varphi\varphi}(\underline{\x}, \underline{\x}') [\nabla_n \mathcal{E} f]|_\mathcal{C}(\underline{\x}) \overline{[\nabla_{n'} \mathcal{E} f]|_\mathcal{C}(\underline{\x}')} \right. \nonumber \\
& \left. - G^+_{\pi\varphi}(\underline{\x}, \underline{\x}') [\mathcal{E} f]|_\mathcal{C}(\underline{\x}) \overline{[\nabla_{n'} \mathcal{E} f]|_\mathcal{C}(\underline{\x}')} - G^+_{\varphi\pi}(\underline{\x}, \underline{\x}') [\nabla_n \mathcal{E} f]|_\mathcal{C}(\underline{\x}) \overline{[ \mathcal{E} f]|_\mathcal{C}(\underline{\x}')} + G^+_{\pi\pi}(\underline{\x}, \underline{\x}') [ \mathcal{E} f]|_\mathcal{C}(\underline{\x}) \overline{[ \mathcal{E} f]|_\mathcal{C}(\underline{\x}')} \right],
\end{align}
where $\mathcal{E}: C_0^\infty(M) \to C_{\rm sc}^\infty(M)$ is the causal propagator of $\Box - m^2 - \xi R$. Setting $u = [\mathcal{E} f]|_\mathcal{C}$ and $v = [\nabla_n \mathcal{E} f]|_\mathcal{C}$ one has \eqref{Positivity}.

Assume now eq. \eqref{Positivity} holds. Now $u$ and $v$ in \eqref{Positivity} can be seen as initial data to the classical Klein-Gordon solution $\phi = \mathcal{E}f$. Then \eqref{Positivity} implies \eqref{CovPositivity}.
\end{proof}

If the commutation relations and the positivity condition are satisfied by the solution, the GNS construction yields a concrete representation of the Klein-Gordon theory in Hilbert space from the abstract Klein-Gordon algebra and the two-point function $G^+$.

If the state $\omega$ is quasi-free with non-vanishing one-point function, the Klein-Gordon equation must also be imposed for the one-point function, together with the initial data $\omega(\varphi)$ and $\omega(\pi)$. The case with non-vanishing one-point function is important, for it includes coherent states, for which we expect the semiclassical regime to be appropriate. If the state is not quasi-free, then initial data must be prescribed for all $n$-point functions. Nevertheless, for the purposes of solving the semiclassical gravity equations, relaxing the quasi-free requirement or the vanishing of the one-point function does not add complications to the solvability of the problem -- all $n$-point function equations decouple from eq. \eqref{semiEFE}, except for the two-point function, which is used to define the expectation value of the renormalised stress-energy tensor of the theory. 


As we have mentioned above, it is desirable to consider Hadamard states in the context of semiclassical gravity.\footnote{But see for example \cite{Meda:2020smb} for some semiclassical gravity solutions in cosmology with states that are Hadamard only in an approximate sense.} In this sense, initial data should be such that the resulting states satisfy the Hadamard condition.

\begin{defn}
Let $(M, g_{ab})$ be a smooth, globally hyperbolic spacetime and $\mathcal{C} \subset M$ a Cauchy surface of $(M, g_{ab})$. We say that
\begin{align}
G^+(\x,\x')|_\mathcal{C} = \omega(\varphi(\underline{\x}) \varphi(\underline{\x}')) = G^+_{\varphi\varphi}(\underline{\x}, \underline{\x}'), & \quad & \nabla_n G^+(\x,\x')|_\mathcal{C} = \omega(\pi(\underline{\x}) \varphi(\underline{\x}')) = G^+_{\pi \varphi}(\underline{\x}, \underline{\x}'), \nonumber \\
\nabla_{n'}G^+(\x,\x')|_\mathcal{C} = \omega(\varphi(\underline{\x}) \pi(\underline{\x}')) = G^+_{\varphi \pi}(\underline{\x}, \underline{\x}'), & \quad & \nabla_n \nabla_{n'} G^+(\x,\x')|_\mathcal{C} = \omega(\pi(\underline{\x}) \pi(\underline{\x}')) = G^+_{\pi \pi}(\underline{\x}, \underline{\x}'), 
\end{align}
where $\underline{x}, \underline{x}' \in \mathcal{C}$, is \emph{Hadamard initial data} for the Wightman two-point function of the Klein-Gordon field if the bi-solution to the Klein-Gordon equation, $G^+$, is a Hadamard two-point function.
\end{defn}

\subsection{The constraints of semiclassical gravity and good initial data sets}
\label{subsec:Constraints}

Semiclassical gravity contains four non-dynamical equations analogous to the Gauss and Hamilton constraint equations of General Relativity. For higher order gravity, which can be seen as semiclassical gravity in the $G_{\rm N} \to 0$ limit, the structure of these constraints has been thoroughly studied in \cite{Osorio-Morales}. The constraints -- like in the General Relativity case -- are a consequence of the Gauss-Codazzi equations. If $\mathcal{C}$ is a Cauchy surface of the globally hyperbolic spacetime $(M, g_{ab})$ with normal $n^a$ and induced metric $h_{ab} = g_{ab} - n_a n_b$, then $h^a{}_b = g^{ac} h_{cb}$ defines a projector onto $\mathcal{C}$, and the Gauss-Codazzi equations imply
\begin{subequations}
\begin{align}
h^b{}_a G_{bc} n^c & = D_b K^b{}_a - D_a K^b{}_b, \\
G_{ab} n^a n^b & = \frac{1}{2} \left( {}^{(3)}R + (K^a{}_a)^2 - K_{ab} K^{ab} \right),
\end{align}
\end{subequations}
where $K_{ab}$ is the extrinsic curvature of the Riemannian spacetime $(\mathcal{C}, h_{ab})$ embedded into $(M, g_{ab})$ and $D_a$ is the covariant derivative compatible with $h_{ab}$. Using eq. \eqref{Iab}, \eqref{Jab} and \eqref{v1}, the semiclassical Einstein equations with a conformally-coupled Klein-Gordon field \eqref{semiEFE} can be written as
\begin{align}
H_{ab} & := - \alpha \Box \left(  R_{ab} - \frac{1}{2} R g_{ab}\right) +  (2 \beta - \alpha)(g_{ab} \Box - \nabla_a \nabla_b) R - \frac{ G_{\rm N}}{360 \pi} g_{ab} \Box R -  2 \alpha R^{cd} R_{cdab} + 2 \beta R R_{ab}  \nonumber \\
 & - \frac{1}{2} g_{ab} \left(\frac{ G_{\rm N}}{180 \pi}     R_{cdef} R^{cdef} - \left( \alpha + \frac{ G_{\rm N}}{180 \pi} \right) R^{cd} R_{cd} +  \beta R^2 \right)  +  R_{ab} - \frac{1}{2} R g_{ab} + \Lambda g_{ab}  = 8 \pi G_{\rm N}  [\mathcal{T}_{ab} w_\ell].
 \label{Constr1An}
\end{align}

The highest-order derivatives appear in the first three terms of eq. \eqref{Constr1An}. To see that the normal components of eq. \eqref{Constr1An} contain no fourth order time derivatives of $g_{ab}$, note that the term $n^a n^b \Box \left(  R_{ab} - \frac{1}{2} R g_{ab}\right)$ contains only up to third order time derivatives by the Gauss-Codazzi equations, see eq. (2.9) in \cite{Osorio-Morales}. Similarly, it is easy to see that for the second term of eq. \eqref{Constr1An}, we have that \cite[Eq. (2.10)]{Osorio-Morales}
\begin{align}
n^a n^b (g_{ab} \Box - \nabla_a \nabla_b) R = h^{ab} \nabla_a \nabla_b R,
\end{align}
and hence it contains no fourth order time derivatives. Finally, the third term of eq. \eqref{Constr1An}, which does not appear in higher-order gravity, also contains no fourth order time derivatives. To see this, it suffices to take the trace of eq. \eqref{Constr1An} and note that the right-hand side of
\begin{align}
   \left( 2 \alpha  - 6 \beta + \frac{ G_{\rm N}}{ 90 \pi} \right) \Box R & = -  2 \alpha R^{cd} R_{cda}{}^a  + 2 \beta R^2 - 2 \left(\frac{ G_{\rm N}}{720 \pi}     R_{cdef} R^{cdef} - \left( \alpha + \frac{ G_{\rm N}}{720 \pi} \right) R^{cd} R_{cd}  \right)  -  R  + 4 \Lambda \nonumber \\
 & - 8 \pi G_{\rm N} g^{cd}[\mathcal{T}_{cd} w_\ell] \label{TraceConstr}
\end{align}
contains no fourth order time derivatives.

The spatial-normal projection of \eqref{Constr1An} can be analysed similarly. The Gauss-Codazzi equations guarantee that the term $n^a h^b{}_c \Box \left(  R_{ab} - \frac{1}{2} R g_{ab}\right)$ contains only up to third order time derivatives, see eq. \cite[Eq. (2.12)]{Osorio-Morales}, and it can be easily verified that \cite{Osorio-Morales}
\begin{align}
n^a h^b{}_c (g_{ab} \Box - \nabla_a \nabla_b) R = n^a D_c \nabla_a R.
\end{align}

Thus, the equations
\begin{subequations}
\label{ConstrAn2}
\begin{align}
& n^a n^b \left( H_{ab} - 8 \pi G_{\rm N} [\mathcal{T}_{ab} w_\ell] \right) = 0, \\
& n^a h^b{}_c \left( H_{ab} - 8 \pi G_{\rm N} [\mathcal{T}_{ab} w_\ell] \right) = 0
\end{align}
\end{subequations}
define constraint equations for semiclassical gravity. Given any solution of semiclassical gravity the constraints are preserved by the contracted Bianchi identities,
\begin{align}
\nabla^a \left( H_{ab} - 8 \pi G_{\rm N} [\mathcal{T}_{ab} w_\ell] \right) = 0,
\end{align}
precisely as in the General Relativity case. 

\begin{defn}
\label{def:GoodInitialData}
We say that a data set of the form \eqref{IVP} is a \emph{good initial data set for semiclassical gravity} if it satisfies the constraints \eqref{ConstrAn2} on an initial value surface $\mathcal{C}$ and the data \eqref{WightmanIVP} satisfy the positivity condition on $\mathcal{C}$ as stated in prop. \ref{Prop:Pos}. 
\end{defn}

Note that def. \ref{def:GoodInitialData} does not impose the Hadamard condition for the initial data of the Wightman function \eqref{WightmanIVP}, but the leading divergences of the two-point function should approximate the Hadamard singular structure for the expectation value to exist on the initial-value surface sufficiently well for the expectation value of the stress-energy tensor to exist. 

We note that the construction of good initial data for semiclassical gravity in general is a very technical open question. Indeed, this is a very technical question already in pure General Relativity. Fortunately, we know of several examples of solutions that, when restricted to a Cauchy surface, provide appropriate solutions to the constraints and define good initial data sets for semiclassical gravity \cite{Juarez-Aubry:2019jbw, Wald:1978ce, Dappiaggi:2008mm, Pinamonti:2010is, Pinamonti:2013wya, Meda:2020smb, Gottschalk:2018kqt, Sanders:2020osl, Benito}, which imply that \eqref{ConstrAn2} are not devoid of solutions.

\section{The stress-energy tensor for conformally-related theories}
\label{sec:Stress-Energy-Conf}

Let $(M, {\sf g}_{ab})$ be a smooth, globally hyperbolic spacetime with Riemann curvature ${\sf R}_{abcd}$, Ricci curvature ${\sf R}_{ab}$ and Ricci scalar ${\sf R}$, which is conformally related to $(M, g_{ab})$ through
\begin{align}
g_{ab} = \ee^{2 \theta} {\sf g}_{ab} = \Theta^2 {\sf g}_{ab}, \hspace{1cm} \theta: M \to \mathbb{R},
\label{Conformalg}
\end{align}
where we assume for the moment that $\theta \in C^\infty(M)$, which implies that $(M, g_{ab})$ is also a smooth, globally-hyperbolic spacetime. We denote the Riemann curvature in $(M, g_{ab})$ by $R_{abcd}$ and the Ricci curvature and Ricci scalar by $R_{ab}$ and $R$ respectively. We denote covariant derivatives compatible with ${\sf g}_{ab}$ as $\nabla_a^{({\sf g})}$, and similarly for d'Alambert's operator, whereas covariant derivatives compatible with $g_{ab}$ are label-free.

We consider a classical Klein-Gordon field, $\psi$, in the fixed background $(M, {\sf g}_{ab})$ obeying the field equation
\begin{align}
Q_\g \psi := \left(\Box^{(\g)} - \Theta^2(m^2 + \xi R) - \frac{1}{6} {\sf R} + \frac{1}{6} \Theta^2 R \right) \psi = 0,
\label{Paux1}
\end{align}
where $R$ and $\Theta$ are viewed as fixed spacetime functions in the potential term of eq. \eqref{Paux1}. It is well-known \cite{Fulling} that solutions to \eqref{Paux1} are in one-to-one correspondence to solutions to the problem defined in $(M, g_{ab})$ by
\begin{align}
P_g \phi := \left(\Box - m^2 - \xi R \right) \phi = 0,
\label{Paux2}
\end{align}
through the relation $\phi = \Theta^{-1} \psi$. Moreover, the fundamental Green operators associated to the normally hyperbolic differential operators appearing in eq. \eqref{Paux1} and \eqref{Paux2} are conformally related.

\begin{prop}
\label{PropE}
Let $Q_\g : C^\infty(M) \to C^\infty(M)$ and $P_g : C^\infty(M) \to C^\infty(M)$ be the normally hyperbolic operators defined by $Q_\g$ in \eqref{Paux1} and $P_g$ in \eqref{Paux2} with fundamental retarded ($+$) and advanced ($-$) Green operators $\mathcal{E}^\pm_{Q_\g}: C_0^\infty(M) \to C_{\rm sc}^\infty(M)$ and $\mathcal{E}^\pm_{P_g}: C_0^\infty(M) \to C_{\rm sc}^\infty(M)$, respectively. The integral kernels of the Green operators satisfy distributionally
\begin{align}
E^\pm_{P_g}(\x, \x') = \Theta^{-1}(\x) E^\pm_{Q_\g}(\x, \x') \Theta^{-1}(\x').
\label{ConfE}
\end{align}  
\end{prop}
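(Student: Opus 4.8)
The plan is to reduce the statement to a purely algebraic intertwining relation between the two normally hyperbolic operators and then invoke the uniqueness of the retarded and advanced Green operators on a globally hyperbolic spacetime. First I would establish the pointwise operator identity
\[
Q_\g (\Theta \phi) = \Theta^3 P_g \phi \qquad \text{for all } \phi \in C^\infty(M),
\]
that is, $Q_\g \circ M_\Theta = M_{\Theta^3} \circ P_g$, where $M_a$ denotes multiplication by the smooth, nowhere-vanishing function $a$. This is the infinitesimal content behind the one-to-one correspondence $\phi = \Theta^{-1}\psi$ cited from \cite{Fulling}. The computation rests on the conformal covariance of the conformally coupled wave operator in four dimensions,
\[
\left(\Box^{(\g)} - \tfrac{1}{6}{\sf R}\right)(\Theta \phi) = \Theta^3 \left(\Box - \tfrac{1}{6} R\right)\phi,
\]
after which the remaining potential terms of $Q_\g$ in \eqref{Paux1}, namely $\big(-\Theta^2(m^2 + \xi R) + \tfrac{1}{6}\Theta^2 R\big)$ acting on $\Theta\phi$, combine with the $-\tfrac{1}{6}\Theta^3 R\phi$ produced above so that the curvature contributions cancel and one is left precisely with $\Theta^3(\Box - m^2 - \xi R)\phi = \Theta^3 P_g \phi$. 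The point of this step is that the potential in $Q_\g$ has been engineered exactly so that this cancellation is exact.

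With the intertwining in hand, I would rewrite it as $P_g = M_{\Theta^{-3}} \circ Q_\g \circ M_\Theta$ and propose the candidate operator
\[
\widetilde{\mathcal{E}}^\pm := M_{\Theta^{-1}} \circ \mathcal{E}^\pm_{Q_\g} \circ M_{\Theta^3} : C_0^\infty(M) \to C_{\rm sc}^\infty(M).
\]
A direct substitution, using $Q_\g \mathcal{E}^\pm_{Q_\g} = \mathcal{E}^\pm_{Q_\g} Q_\g = \mathrm{id}$ and cancelling the multiplication operators, shows that $P_g \widetilde{\mathcal{E}}^\pm = \widetilde{\mathcal{E}}^\pm P_g = \mathrm{id}$ on $C_0^\infty(M)$, so $\widetilde{\mathcal{E}}^\pm$ is a Green operator for $P_g$. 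Because the multiplication operators are local and $\mathcal{E}^\pm_{Q_\g}$ satisfies $\supp(\mathcal{E}^\pm_{Q_\g} h) \subseteq J^\pm(\supp h)$, and because conformally related metrics share the same causal structure so that $J^\pm$ computed with $\g_{ab}$ and with $g_{ab}$ coincide, the candidate inherits the retarded/advanced support property $\supp(\widetilde{\mathcal{E}}^\pm f) \subseteq J^\pm(\supp f)$. By the uniqueness of the retarded and advanced Green operators of a normally hyperbolic operator on a globally hyperbolic spacetime, $\widetilde{\mathcal{E}}^\pm = \mathcal{E}^\pm_{P_g}$.

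Finally I would convert the operator identity $\mathcal{E}^\pm_{P_g} = M_{\Theta^{-1}} \circ \mathcal{E}^\pm_{Q_\g}\circ M_{\Theta^3}$ into the kernel identity \eqref{ConfE}. Writing each Green operator through its integral kernel with respect to the volume form of its own spacetime and using $\dd\vol_g = \Theta^4 \dd\vol_\g$ in four dimensions,
\[
(\mathcal{E}^\pm_{P_g} f)(\x) = \Theta^{-1}(\x)\int_M E^\pm_{Q_\g}(\x,\x')\, \Theta^{3}(\x') f(\x')\, \dd\vol_\g(\x') = \Theta^{-1}(\x)\int_M E^\pm_{Q_\g}(\x,\x')\, \Theta^{-1}(\x') f(\x')\, \dd\vol_g(\x'),
\]
and comparing with $(\mathcal{E}^\pm_{P_g}f)(\x) = \int_M E^\pm_{P_g}(\x,\x') f(\x')\, \dd\vol_g(\x')$ yields $E^\pm_{P_g}(\x,\x') = \Theta^{-1}(\x) E^\pm_{Q_\g}(\x,\x') \Theta^{-1}(\x')$ as an identity of distributions, since $f$ is arbitrary. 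The main obstacle, such as it is, is the careful bookkeeping of conformal weights: the exact power $\Theta^3$ in the intertwining must conspire with the $\Theta^{4}$ from the volume element so that precisely one factor $\Theta^{-1}$ survives at each argument, and one must check that the distributional pairings are legitimate, which is ensured by testing against arbitrary $f \in C_0^\infty(M)$ and by all multiplications being by smooth, nowhere-vanishing functions.
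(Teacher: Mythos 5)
Your proof is correct and follows essentially the same route as the paper's: both rest on the intertwining relation $P_g = \Theta^{-3} Q_\g \Theta$, the volume-element relation $\dd\vol_g = \Theta^4 \dd\vol_\g$, and the uniqueness of the retarded and advanced Green operators on a globally hyperbolic spacetime. Your packaging --- exhibiting the candidate $M_{\Theta^{-1}}\circ\mathcal{E}^\pm_{Q_\g}\circ M_{\Theta^3}$, checking the support property via the shared causal structure, and only then passing to kernels --- is if anything slightly more explicit than the paper's manipulation of the defining integral identities, but it is the same argument.
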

\begin{proof}
Since we have assumed above that $\Theta = \ee^\theta$ with $\theta \in C^\infty(M)$, $\Theta$ is strictly positive and $\supp \mathcal{E}^\pm_{P_g} = \supp \mathcal{E}^\pm_{Q_\g}$ as distributions. One can verify directly that
\begin{align}
P_g = \Theta^{-3} Q_\g \Theta, 
\end{align}
and using the fact that the volume elements are related by $\dd \vol_g = \Theta^4 \dd \vol_\g$, one has that
\begin{align}
f = \int_M \dd \vol_g (\x') E^\pm_{P_g} (\x, \x') P_g' f(\x') =  \int_M \dd \vol_\g (\x')  E^\pm_{P_g} (\x, \x') \Theta(\x') Q_\g' \Theta(\x') f(\x'),
\label{PropEq1}
\end{align}
where on the left-hand side of eq. \eqref{PropEq1} we have used $f = \mathcal{E}^\pm_{P_g} P_\g f$. Setting $g = \Theta f$ and multiplying eq. \eqref{PropEq1} by $\Theta(\x)$, we have
\begin{align}
 \int_M \dd \vol_\g (\x')  \Theta(\x) E^\pm_{P_g} (\x, \x') \Theta(\x') Q_\g' g(\x') = g(\x) = \int_M \dd \vol_\g (\x')  E^\pm_{Q_\g} (\x, \x') Q_\g' g(\x'),
\end{align}
where on the right-hand side we have used $g = \mathcal{E}^\pm_{Q_\g} Q_\g g$. Since $f$ (and therefore $g$) is arbitrary, and by the uniqueness of the Green fundamental operators, eq. \eqref{ConfE} follows.
\end{proof}

The situation is similar for quantum field theories defined in $(M, \g_{ab})$ and $(M, g_{ab})$. Let $\Psi(f)$ be the generators of the Klein-Gordon algebra $\mathsf{A}(M)$ in the spacetime $(M, \g_{ab})$ and $\Phi(f)$ be the generators of the Klein-Gordon algebra $\mathscr{A}(M)$ in $(M, g_{ab})$. We can define quasi-free states in the theory $\mathsf{A}(M)$ as positive, normalised maps $\omega_\Psi: \mathsf{A}(M) \to \mathbb{C}$ by prescribing the Wightman two-point function
\begin{align}
\omega_\Psi(\Psi(f) \Psi(g)) = \int_{M \times M} \dd \vol(\x) \dd \vol(\x') {\sf G}^+(\x, \x') f(\x) g(\x'),
\label{WightmanConf}
\end{align}
which, together with the one-point function $\omega_\Psi(\Psi(f)) = \psi(f)$ if it is non-vanishing, fully characterises the state of the theory. The kernel of the Wightman function \eqref{WightmanConf} is a distributional bi-solution to the Klein-Gordon equation,
\begin{align}
& \left(\Box^{(\g)}_\x - \Theta^2(\x)(m^2 + \xi R(\x)) - \frac{1}{6} {\sf R}(\x) + \frac{1}{6} \Theta^2(\x) R(\x) \right){\sf G}^+(\x, \x') \nonumber \\
& = \left(\Box^{(\g)}_{\x'} - \Theta^2(\x')(m^2 + \xi R(\x')) - \frac{1}{6} {\sf R}(\x') + \frac{1}{6} \Theta^2(\x') R(\x') \right){\sf G}^+(\x, \x') = 0.
\end{align}

If the one-point function is non-vanishing, then one must impose that it satisfy the Klein-Gordon equation \eqref{Paux1} too. 

It is formally known (see e.g. \cite{Birrell-Davies}) that we can construct (quasi-free) states in the conformally-related theory $\mathscr{A}(M)$ via the formula
\begin{align}
G^+(\x, \x') = \Theta^{-1}(\x) {\sf G}^+(\x, \x') \Theta^{-1}(\x').
\label{ConfWightman}
\end{align}

More precisely, we can show the following result.

\begin{prop}
\label{Prop:Had}
Let $\omega_\Psi: {\sf A}(M) \to \mathbb{C}$ be a Hadamard state with Wightman two-point function kernel ${\sf G}^+$. The kernel \eqref{ConfWightman} defines the Wightman function of a Hadamard state $\omega_\Phi: \mathscr{A}(M) \to \mathbb{C}$.
\end{prop}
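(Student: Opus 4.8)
The plan is to verify three things in turn: that $G^+$ defined by \eqref{ConfWightman} is a bi-solution of the Klein-Gordon equation \eqref{Paux2}, that it is the two-point function of a genuine quasi-free state $\omega_\Phi$ on $\mathscr{A}(M)$, and finally that this state is Hadamard. The bi-solution property is immediate from the operator identity $P_g = \Theta^{-3} Q_\g \Theta$ established in the proof of proposition \ref{PropE}: acting with $P_{g}$ in the first argument on $G^+ = \Theta^{-1} {\sf G}^+ \Theta^{-1}$ gives $\Theta^{-3}(\x)\, Q_{\g,\x}\big({\sf G}^+(\x,\x')\big)\,\Theta^{-1}(\x') = 0$, since the factor $\Theta$ in the composition cancels the $\Theta^{-1}(\x)$, and symmetrically in the primed argument. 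For the state properties, Hermiticity and symmetry of $G^+$ follow from those of ${\sf G}^+$ because $\Theta$ is real; the correct antisymmetric part (fixing the canonical commutation relations with respect to $\mathcal{E}_{P_g}$) follows from proposition \ref{PropE}, since multiplying the identity $E^\pm_{P_g} = \Theta^{-1} E^\pm_{Q_\g} \Theta^{-1}$ reproduces the causal propagator of $P_g$ from that of $Q_\g$ under exactly the same conformal factor that relates the two-point functions.

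For positivity I would insert \eqref{ConfWightman} into \eqref{CovPositivity}, use the relation between volume elements $\dd \vol_g = \Theta^4 \dd \vol_\g$, and absorb the factors into a rescaled test function: for $f \in C_0^\infty(M)$ one finds
\[
\int_{M\times M} \dd\vol_g(\x)\,\dd\vol_g(\x')\, G^+(\x,\x') f(\x)\overline{f(\x')} = \int_{M\times M} \dd\vol_\g(\x)\,\dd\vol_\g(\x')\, {\sf G}^+(\x,\x')\, (\Theta^3 f)(\x)\,\overline{(\Theta^3 f)(\x')} \geq 0,
\]
because $\Theta^3 f \in C_0^\infty(M)$ and ${\sf G}^+$ is positive as the two-point function of the Hadamard state $\omega_\Psi$. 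Thus $G^+$ genuinely defines a quasi-free state $\omega_\Phi$ of $\mathscr{A}(M)$.

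The crux is the Hadamard property, and here I would argue microlocally via Radzikowski's theorem rather than by directly transforming the parametrix \eqref{HadamardBiDis}: a two-point function of a state is Hadamard if and only if its wavefront set coincides with the microlocal spectrum set $\Gamma_g \subset (T^*M)^2\setminus 0$ of future-directed null-coparallel covector pairs. The first observation is that $\mathrm{WF}(G^+) = \mathrm{WF}({\sf G}^+)$, since $G^+$ and ${\sf G}^+$ differ only by multiplication by the smooth, strictly positive function $(\x,\x') \mapsto \Theta^{-1}(\x)\Theta^{-1}(\x')$ on $M\times M$, and multiplication by a smooth nonvanishing function leaves the wavefront set invariant. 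The second observation, which is the main obstacle and requires care, is that $\Gamma_\g = \Gamma_g$: the set $\Gamma$ is built from the null covectors, the time orientation, and the relation "$(\x,k)$ and $(\x',k')$ lie on a common null bicharacteristic." Null covectors and future-directedness are conformally invariant because $g^{ab} = \Theta^{-2}\g^{ab}$ with $\Theta>0$, while the bicharacteristic relation (equivalently, parallel transport of a null covector along a null geodesic) is preserved because the Hamiltonian flows of the principal symbols $p_g = \Theta^{-2} p_\g$ coincide up to reparametrisation on the shared characteristic set $\{p=0\}$. Hence ${\sf G}^+$ Hadamard for $(M,\g_{ab})$ gives $\mathrm{WF}({\sf G}^+) = \Gamma_\g = \Gamma_g$, so $\mathrm{WF}(G^+) = \Gamma_g$ and $G^+$ is Hadamard by Radzikowski's theorem. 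I expect the delicate point to be a clean justification of the conformal invariance of the bicharacteristic relation; the alternative of transforming $H_\ell$ directly is possible but entails tracking the behaviour of $\sigma$, $\Delta^{1/2}$ and the $v_n$ under \eqref{Conformalg}, which is considerably messier than the microlocal argument.
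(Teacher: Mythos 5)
Your proposal is correct and follows essentially the same route as the paper: bi-solution property from $P_g = \Theta^{-3} Q_\g \Theta$, hermiticity and positivity from the reality, smoothness and positivity of $\Theta$, and the Hadamard property via Radzikowski's microlocal characterisation. The only substantive difference is one substep: you establish the equality of the two microlocal spectrum sets by arguing the conformal invariance of the null bicharacteristic relation directly, whereas the paper deduces the same fact from the conformal relation of the causal propagators in Prop.~\ref{PropE} (whose wavefront sets then coincide, fixing the target set in Radzikowski's theorem).
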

\begin{proof}
That $G^+$ is a distributional bi-solution to the Klein-Gordon equation follows from the above discussions. Since $\Theta$ is a positive, real, smooth function, we have that $G^+(\x, \x') = \overline{G^+(\x', \x)}$. The state positivity of $G^+$ follows from the positivity of ${\sf G}^+$ directly. The Hadamard property can be verified as in \cite[Prop. 8]{Benito}, using the fact that the causal propagators of the operators $Q_\g$ (defining the field equation in ${\sf A}(M)$) and of $P_g$ (defining the field equation in $\mathscr{A}(M)$) are related by eq. \eqref{ConfE} of Prop. \ref{PropE}, together with Radzikowski's theorem \cite[Theorem 5.1]{Radzikowski}, in particular the equivalence of items 1 and 3 in that theorem.
\end{proof}

The choice $m^2 = 0$ and $\xi = 1/6$ renders ${\sf A}(M)$ and $\mathscr{A}(M)$ into theories for conformally covariant Klein-Gordon fields. Note that in this case, the stress-energy tensor of the theory ${\sf A}(M)$ is conserved, as the potential term is free of background structure in $(M, {\sf g}_{ab})$, $\Theta$ and $R$. We make this choice now. 

Following eq. (67) of \cite{Decanini-Folacci}, we can write the expectation value of the stress-energy tensor \eqref{Stress-energyDef} for the conformal Klein-Gordon field $\Phi$ in the state $\omega_\Phi$ as
\begin{align}
\omega_\Phi(T_{ab}) & = \frac{1}{2(2 \pi)^2}\left(- [w_{ab}] + \frac{1}{2} g_{ab} [\Box w] + \frac{1}{3} [w]_{;ab} - \frac{1}{12}  g_{ab} \Box [w] + \frac{1}{6} \left( R_{ab} - \frac{1}{2} g_{ab} R \right)[w] \right) \nonumber \\
& + \frac{1}{(2\pi)^2} g_{ab} [v_1] + \alpha_1 g_{ab} + \alpha_2 G_{ab} + \alpha_3 I_{ab} + \alpha_4 J_{ab},
\label{TabDeca}
\end{align}
where semicolons denote covariant derivation with respect to the connection $\nabla_a$ and with
\begin{subequations}
\begin{align}
[w](\x) & := \lim_{\x'\to \x} G^+(\x, \x') - H(\x, \x'), \\
[w_{ab}](\x) & := \lim_{\x'\to \x} \nabla_b \nabla_a \left[G^+(\x, \x') - H(\x, \x') \right], \\
[\Box w](\x) & := \lim_{\x'\to \x} g^{ab} \nabla_b \nabla_a \left[G^+(\x, \x') - H(\x, \x') \right] = g^{ab}[w_{ab}](\x).
\end{align}
\end{subequations}

A similar expression can be written for $\omega_\Psi(T_{ab})$. Using Prop. \eqref{Prop:Had}, eq. \eqref{TabDeca} can be written in terms of the limits of the regular part of the two-point function ${\sf G}^+$ and its derivatives,
\begin{subequations}
\begin{align}
[{\sf w}](\x) & := \lim_{\x'\to \x} {\sf G}^+(\x, \x') - {\sf H}(\x, \x'), \\
[{\sf w}_{a}](\x) & := \lim_{\x'\to \x}  \nabla_a \left[{\sf G}^+(\x, \x') - {\sf H}(\x, \x') \right], \\
[{\sf w}_{ab}](\x) & := \lim_{\x'\to \x} \nabla_b \nabla_a \left[{\sf G}^+(\x, \x') - {\sf H}(\x, \x') \right], \label{sfwab}
\end{align}
\end{subequations}
and the conformal factor and its covariant derivatives. Indeed, we have that
\begin{subequations}
\label{w-conf}
\begin{align}
[w] & 
= \Theta^{-2} [w] = \ee^{-2 \theta} [{\sf w}], \\
[w_{ab}] 
& = \Theta^{-1}   \left(\Theta^{-1}_{;ab} [{\sf w}]  + \Theta^{-1}_{;a} [{\sf w}_{b}] + \Theta^{-1}_{;b} [{\sf w}_{a}]+ \Theta^{-1} [{\sf w}_{ab}]  \right) \nonumber \\
& = \ee^{-2 \theta} \left( \left(\theta_{;a} \theta_{;b} - \theta_{;ab} \right) [{\sf w}]  - \theta_{;a} [{\sf w}_{b}] - \theta_{;b} [{\sf w}_{a}]+  [{\sf w}_{ab}] \right).
\end{align}
\end{subequations}

Using eq. \eqref{w-conf}, we can write eq. \eqref{TabDeca} as
\begin{subequations}
\label{TabConformal}
\begin{align}
\omega_\Phi(T_{ab}) & = [\mathcal{T}_{ab} w] + \frac{1}{(2\pi)^2} g_{ab} [v_1] + \alpha_1 g_{ab} + \alpha_2 G_{ab} + \alpha_3 I_{ab} + \alpha_4 J_{ab}, \\
[\mathcal{T}_{ab} w] & = \frac{\ee^{2\theta}}{2(2 \pi)^2}\left(  \frac{1}{3} \left(\theta_{;a} \theta_{;b} + \theta_{;ab} \right) [{\sf w}]  + \theta_{;a} [{\sf w}_{b}] + \theta_{;b} [{\sf w}_{a}] -  [{\sf w}_{ab}]  +\frac{1}{3} \left(  - 2 \theta_{;a} [{\sf w}]_{;b} - 2 \theta_{;b} [{\sf w}]_{;a} + [{\sf w}]_{;ab} \right)  \right. \nonumber \\
& \left. + \frac{1}{2} g_{ab} g^{cd}\left( \left(\frac{1}{3}\theta_{;c} \theta_{;d} - \frac{2}{3}\theta_{;cd} \right) [{\sf w}]  - \theta_{;c} [{\sf w}_{d}] - \theta_{;d} [{\sf w}_{c}]+  [{\sf w}_{cd}] -\frac{1}{6} \left( - 2 \theta_{;c} [{\sf w}]_{;d} - 2 \theta_{;d} [{\sf w}]_{;c} + [{\sf w}]_{;cd} \right) \right) \right. \nonumber \\
& \left. + \frac{1}{6} \left( R_{ab} - \frac{1}{2} g_{ab} R \right)[{\sf w}] \right). \label{TabConfw}
\end{align}
\end{subequations}

The form of $\omega_\Phi(T_{ab})$ given by eq. \eqref{TabConformal} in terms of the coincidence limit of the regular part of the state with two-point function ${\sf G}^+$ (and its derivatives) and of the function $\theta$ appearing in the conformal factor will be useful in order to construct the semiclassical gravity solutions with a conformally coupled field.

\section{Semiclassical gravity with a conformally covariant field}
\label{sec:SemiclassGrav}

Using the explicit form of eq. \eqref{Iab}, \eqref{Jab} and \eqref{v1}, the semiclassical Einstein equations with a conformally-coupled Klein-Gordon field \eqref{semiEFE} can be written as
\begin{align}
 (\alpha - 2 \beta) R_{;ab} - \alpha \Box R_{ab} - \left( \frac{\alpha}{2}  - 2 \beta + \frac{ G_{\rm N}}{360 \pi} \right) g_{ab} \Box R -  2 \alpha R^{cd} R_{cdab} + 2 \beta R R_{ab} & \nonumber \\
 - \frac{1}{2} g_{ab} \left(\frac{ G_{\rm N}}{180 \pi}     R_{cdef} R^{cdef} - \left( \alpha + \frac{ G_{\rm N}}{180 \pi} \right) R^{cd} R_{cd} +  \beta R^2 \right)  +  R_{ab} - \frac{1}{2} R g_{ab} + \Lambda g_{ab} & = 8 \pi G_{\rm N}  [\mathcal{T}_{ab} w_\ell].
 \label{SGEFE1}
\end{align}

Note that the system of eq. \eqref{SGEFE1} is closely related to the higher-order gravity theory studied in \cite{Stelle, Noakes} and recently in \cite{Osorio-Morales}. In particular, note that eq. \eqref{SGEFE1} reduce to eq. (5) in \cite{Noakes} if one sets $G_{\rm N} = 0$ and $\Lambda = 0$.\footnote{Our definitions of $\alpha$ and $\beta$ differ from those in \cite{Noakes} by a factor of $16 \pi G_{\rm N}$ and our curvature tensor definitions are also different, resulting in some relative minus signs in the quadratic curvature terms. Noakes uses a convention in \cite{Noakes} consistent with that of the seminal monograph of Birrell and Davies \cite{Birrell-Davies}, while our convention is consistent with the one adopted by Decanini and Folacci in \cite{Decanini-Folacci}.}

Following \cite{Noakes}, it is convenient to split eq. \eqref{SGEFE1} into trace and traceless equations, 
\begin{subequations}
\label{ClassicalSystem}
\begin{align}
&  -  \left( 2 \alpha  - 6 \beta + \frac{ G_{\rm N}}{ 90 \pi} \right) \Box R -  2 \alpha R^{cd} R_{cda}{}^a  + 2 \beta R^2 - 2 \left(\frac{ G_{\rm N}}{720 \pi}     R_{cdef} R^{cdef} - \left( \alpha + \frac{ G_{\rm N}}{720 \pi} \right) R^{cd} R_{cd}  \right)  -  R  + 4 \Lambda \nonumber \\
& = 8 \pi G_{\rm N} g^{cd}[\mathcal{T}_{cd} w_\ell], \label{TraceEq} \\
& (\alpha - 2 \beta) \left( R_{;ab} - \frac{1}{4}  g_{ab} \Box R \right) - \alpha \Box \tilde{R}_{ab}  -  2 \alpha R^{cd} \left(R_{cdab} - \frac{1}{4} g_{ab} R_{cde}{}^e \right) + 2 \beta R \tilde{R}_{ab}  +  \tilde{R}_{ab}  \nonumber \\
& = 8 \pi G_{\rm N} \left(   [\mathcal{T}_{ab} w_\ell]- \frac{1}{4} g_{ab} g^{cd}[\mathcal{T}_{cd} w_\ell] \right),
\label{TraceFree}
\end{align}
\end{subequations}
where $\tilde R_{ab} := R_{ab} - (1/4) R g_{ab}$ is the traceless part of the Ricci scalar. The term $\Box R$ on the left-hand side of eq. \eqref{TraceFree} can be eliminated using eq. \eqref{TraceEq}, and appearances of the (traceful) Ricci tensor in eq. \eqref{TraceEq} are short-hand notation for $R_{ab} = \tilde R_{ab} + (1/4) R g_{ab}$.

It is known how to solve the system \eqref{ClassicalSystem} in the case in which $G_{\rm N} = 0$ as an initial-value problem \cite[Sec. IV]{Noakes}. The point is to introduce in addition to $g_{ab}$ the auxiliary variables $\mathscr{R}$ and $\tilde{\mathscr{R}}_{ab}$, which at the level of solutions should coincide with the Ricci scalar and the traceless part of the Ricci tensor, $R$ and $R_{ab} - (1/4) R g_{ab}$ respectively, by imposing appropriate constraints that should hold everywhere for the physical solution. The system \eqref{ClassicalSystem}, together with the harmonic gauge condition for $g_{ab}$ and the constraints, can be enlarged into a quasi-linear, diagonal, second order, hyperbolic system, and Leray's theorem \cite{Leray} guarantees that given second order, smooth initial data for $g_{ab}$ $\mathscr{R}$ and $\tilde{\mathscr{R}}_{ab}$ on $\mathcal{C}$ satisfying the constraints, the problem admits a unique, smooth solution in a neighbourhood of $\mathcal{C}$. Furthermore, perturbations of initial data lead to perturbative effects in the solutions.

The purpose of this section is to extend these existence, uniqueness and stability results to semiclassical gravity in the case in which one deals with a conformally covariant field in conformally static, globally hyperbolic spacetimes. The first step will be to prescribe criteria for Hadamard initial data for the problem.

\subsection{Hadamard initial data}

It is known from \cite{Benito} how to prescribe Hadamard initial data for static spacetimes. The following results shows that given Hadamard initial data in a static spacetime, it is possible to prescribe Hadamard initial data for a conformally covariant field in a conformally static spacetime. This is nothing but the initial data for conformal states.

\begin{lemma}
\label{LemHadInDat}
Let $(M, {\sf g}_{ab})$ be a fixed smooth, static, globally hyperbolic spacetime and $\mathcal{C} \subset M$ a Cauchy surface of $(M, {\sf g}_{ab})$ with normal ${\sf n}^a$. Let $N_\mathcal{C} \subset M$ be a neighbourhood of $\mathcal{C}$, $\theta \in C^\infty(N_\mathcal{C}, \mathbb{R})$ and $(N_\mathcal{C}, g_{ab} = \ee^{2 \theta} {\sf g}_{ab})$ a (not necessarily static) globally hyperbolic spacetime. Let
\begin{align}
G^+(\x,\x')|_\mathcal{C} = \omega(\varphi(\underline{\x}) \varphi(\underline{\x}')) = G^+_{\varphi\varphi}(\underline{\x}, \underline{\x}'), & \quad & \nabla_n G^+(\x,\x')|_\mathcal{C} = \omega(\pi(\underline{\x}) \varphi(\underline{\x}')) = G^+_{\pi \varphi}(\underline{\x}, \underline{\x}'), \nonumber \\
\nabla_{n'}G^+(\x,\x')|_\mathcal{C} = \omega(\varphi(\underline{\x}) \pi(\underline{\x}')) = G^+_{\varphi \pi}(\underline{\x}, \underline{\x}'), & \quad & \nabla_n \nabla_{n'} G^+(\x,\x')|_\mathcal{C} = \omega(\pi(\underline{\x}) \pi(\underline{\x}')) = G^+_{\pi \pi}(\underline{\x}, \underline{\x}'), 
\end{align}
where $\underline{x}, \underline{x}' \in \mathcal{C}$, be initial data for the two-point function $G^+$ of a conformally covariant Klein-Gordon field in $(N_\mathcal{C}, g_{ab})$, where the covariant derivatives $\nabla_n$ are taken in the direction of $n^a := \ee^{-\theta} {\sf n}^a$. If it holds that
\begin{align}
 G^+_{\varphi\varphi}(\underline{\x}, \underline{\x}') & = {\sf G}^+_{\varphi \varphi}(\underline{\x}, \underline{\x}') \left.\left[\ee^{-\theta(\x)}  \ee^{-\theta(\x')}\right]\right|_\mathcal{C}, \nonumber \\
  G^+_{\pi \varphi}(\underline{\x}, \underline{\x}') & = {\sf G}^+_{\pi \varphi}(\underline{\x}, \underline{\x}') \left.\left[\ee^{-2 \theta(\x)}  \ee^{-\theta(\x')}\right]\right|_\mathcal{C} + {\sf G}^+_{\varphi \varphi}(\underline{\x}, \underline{\x}') \nabla_n \left.\left[\ee^{-\theta(\x)}  \ee^{-\theta(\x')}\right]\right|_\mathcal{C}, \nonumber \\
 G^+_{\varphi \pi}(\underline{\x}, \underline{\x}') & = {\sf G}^+_{\varphi \pi}(\underline{\x}, \underline{\x}') \left.\left[\ee^{-\theta(\x)}  \ee^{-2\theta(\x')}\right]\right|_\mathcal{C} + {\sf G}^+_{\varphi \varphi}(\underline{\x}, \underline{\x}') \nabla_{n'} \left.\left[\ee^{-\theta(\x)}  \ee^{-\theta(\x')}\right]\right|_\mathcal{C}, \nonumber \\
  G^+_{\pi \pi}(\underline{\x}, \underline{\x}') & = {\sf G}^+_{\pi \pi}(\underline{\x}, \underline{\x}') \left.\left[\ee^{-2\theta(\x)}  \ee^{-2\theta(\x')}\right]\right|_\mathcal{C} + {\sf G}^+_{\varphi \pi}(\underline{\x}, \underline{\x}') \nabla_n \left.\left[\ee^{-\theta(\x)}  \ee^{-2\theta(\x')}\right]\right|_\mathcal{C} \nonumber \\
  & + {\sf G}^+_{\pi \varphi}(\underline{\x}, \underline{\x}') \nabla_{n'} \left.\left[\ee^{-2\theta(\x)}  \ee^{-\theta(\x')}\right]\right|_\mathcal{C} + {\sf G}^+_{\varphi \varphi}(\underline{\x}, \underline{\x}') \nabla_n \nabla_{n'} \left.\left[\ee^{-\theta(\x)}  \ee^{-\theta(\x')}\right]\right|_\mathcal{C}, 
\label{HadInData}
\end{align}
where 
\begin{align}
{\sf G}^+(\x,\x')|_\mathcal{C} = {\sf G}^+_{\varphi \varphi}(\underline{\x}, \underline{\x}'), & \quad & \nabla_{\sf n} {\sf G}^+(\x,\x')|_\mathcal{C} = {\sf G}^+_{\pi \varphi}(\underline{\x}, \underline{\x}'), \nonumber \\
\nabla_{{\sf n}'}{\sf G}^+(\x,\x')|_\mathcal{C} = {\sf G}^+_{\varphi \pi}(\underline{\x}, \underline{\x}'), & \quad & \nabla_{\sf n} \nabla_{{\sf n}'} {\sf G}^+(\x,\x')|_\mathcal{C} = {\sf G}^+_{\pi \pi}(\underline{\x}, \underline{\x}')
\label{StatHadInData}
\end{align}
is Hadamard initial data for the two-point function of a conformally covariant Klein-Gordon field in the static spacetime $(M, {\sf g}_{ab})$, then \eqref{HadInData} is Hadamard initial data.
\end{lemma}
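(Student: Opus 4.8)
The plan is to recognise that the data \eqref{HadInData} is nothing but the Cauchy data induced on $\mathcal{C}$ by the conformally transformed two-point function of eq. \eqref{ConfWightman}, namely $G^+(\x,\x') = \Theta^{-1}(\x)\, {\sf G}^+(\x,\x')\, \Theta^{-1}(\x')$ with $\Theta = \ee^\theta$, and then to invoke Prop. \ref{Prop:Had}. Concretely, I would first recall that for the conformally covariant choice $m^2 = 0$, $\xi = 1/6$ the operator $Q_\g$ of \eqref{Paux1} is the conformal wave operator $\Box^{(\g)} - \frac{1}{6}{\sf R}$ in the static spacetime, so that ${\sf G}^+$ is precisely a conformally coupled massless Hadamard two-point function in $(M,\g_{ab})$, for which \cite{Benito} supplies the static data \eqref{StatHadInData}. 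Next, $G^+ = \Theta^{-1}{\sf G}^+\Theta^{-1}$ is a distributional bi-solution of the conformally covariant Klein-Gordon equation in $(N_\mathcal{C}, g_{ab})$ by the relation $P_g = \Theta^{-3} Q_\g \Theta$ and Prop. \ref{PropE}, and by Prop. \ref{Prop:Had} it defines a Hadamard state in $(N_\mathcal{C}, g_{ab})$. Thus the whole content of the lemma reduces to checking that the four expressions in \eqref{HadInData} are exactly $G^+|_\mathcal{C}$, $\nabla_n G^+|_\mathcal{C}$, $\nabla_{n'} G^+|_\mathcal{C}$ and $\nabla_n \nabla_{n'} G^+|_\mathcal{C}$ computed from this $G^+$ together with the static data \eqref{StatHadInData}.

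The heart of the argument is therefore a Leibniz-rule computation, which I would organise around two simplifications. First, both $G^+$ and ${\sf G}^+$ are bi-scalars, so a covariant derivative acting on either of them in a fixed direction reduces to a directional derivative; in particular there is no ambiguity between the $g$- and $\g$-compatible connections when differentiating the two-point functions, and the only connection dependence resides in the smooth prefactors $\Theta^{\pm1}$. Second, the normal $n^a = \ee^{-\theta}{\sf n}^a = \Theta^{-1}{\sf n}^a$ is precisely the $g$-unit normal to $\mathcal{C}$, since $g_{ab} = \Theta^2 \g_{ab}$; acting on the bi-scalar ${\sf G}^+$ one then has $\nabla_n {\sf G}^+ = \Theta^{-1}\nabla_{\sf n}{\sf G}^+$ and likewise for $\nabla_{n'}$. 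Applying $\nabla_n$ and $\nabla_{n'}$ to the product $\Theta^{-1}(\x)\,{\sf G}^+(\x,\x')\,\Theta^{-1}(\x')$ and distributing each normal derivative among the three scalar factors, every coincidence of ${\sf G}^+$ or of its normal derivatives acquires a prefactor $\Theta^{-p}(\x)\Theta^{-q}(\x')$, where $p$ and $q$ count one power for the explicit $\Theta^{-1}$ in \eqref{ConfWightman} plus one further power for each normal derivative that lands on ${\sf G}^+$ at the corresponding argument, while the remaining normal derivatives act on the conformal factors $\ee^{-\theta}$. Restricting to $\mathcal{C}$ and identifying $\nabla_{\sf n}{\sf G}^+|_\mathcal{C}$, $\nabla_{{\sf n}'}{\sf G}^+|_\mathcal{C}$ and $\nabla_{\sf n}\nabla_{{\sf n}'}{\sf G}^+|_\mathcal{C}$ with the static data \eqref{StatHadInData} reproduces the four identities of \eqref{HadInData} term by term; for instance, the term of $G^+_{\pi\pi}$ proportional to ${\sf G}^+_{\pi\pi}$ comes with $\ee^{-2\theta(\x)}\ee^{-2\theta(\x')}|_\mathcal{C}$, whereas the one proportional to ${\sf G}^+_{\varphi\varphi}$ carries $\nabla_n\nabla_{n'}[\ee^{-\theta(\x)}\ee^{-\theta(\x')}]|_\mathcal{C}$.

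With this identification of Cauchy data in hand, I would conclude by uniqueness. Since $(N_\mathcal{C}, g_{ab})$ is globally hyperbolic and $\mathcal{C}$ is a Cauchy surface for it, the bi-solution of the conformally covariant Klein-Gordon equation determined by the data \eqref{HadInData} is unique, and it therefore coincides with $G^+ = \Theta^{-1}{\sf G}^+\Theta^{-1}$. By Prop. \ref{Prop:Had} this $G^+$ is the two-point function of a Hadamard state, so \eqref{HadInData} is Hadamard initial data in the sense of the definition following Prop. \ref{Prop:Pos}.

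The main obstacle is not conceptual but one of careful bookkeeping: the conformal factors enter from two distinct sources, the explicit pair of $\Theta^{-1}$'s in \eqref{ConfWightman} and the rescaling $n^a = \Theta^{-1}{\sf n}^a$ of the normal, and the combinatorics of distributing up to two normal derivatives over three scalar factors must be tracked so that the mixed terms, those with a normal derivative of $\theta$ multiplying a lower-order coincidence of ${\sf G}^+$, match \eqref{HadInData} exactly. A secondary point to verify is that $\mathcal{C}$ is a Cauchy surface for $(N_\mathcal{C}, g_{ab})$ and that the restriction of the state ${\sf G}^+$ to $N_\mathcal{C}$ is the relevant one; both follow from the invariance of the causal structure under the conformal rescaling $g_{ab} = \Theta^2\g_{ab}$.
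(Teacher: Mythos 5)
Your proposal is correct and follows essentially the same route as the paper: identify \eqref{HadInData} as the Cauchy data induced on $\mathcal{C}$ by $G^+ = \Theta^{-1}{\sf G}^+\Theta^{-1}$, use uniqueness of the bi-solution determined by Cauchy data on the (conformally invariant) Cauchy surface, and conclude via Prop.~\ref{Prop:Had}. The only difference is that you spell out the Leibniz-rule bookkeeping and the rescaling $n^a = \Theta^{-1}{\sf n}^a$ of the normal, which the paper leaves implicit in the assertion that $G^+$ ``uniquely satisfies'' the conformal Klein--Gordon equation with data \eqref{HadInData}.
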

\begin{proof}
The Hadamard Wightman function ${\sf G}^+$ is the (unique) solution to
\begin{align}
\left(\Box^{(\g)}_\x + \frac{1}{6} {\sf R}(\x) \right) {\sf G}^+(\x, \x') = \left(\Box^{(\g)}_{\x'} + \frac{1}{6} {\sf R}(\x') \right) {\sf G}^+(\x, \x') = 0
\label{sfGSolve}
\end{align}
with initial data \eqref{StatHadInData}.

It can be explicitly obtained using the causal propagator of the operator $\Box^{(\g)} - \frac{1}{6} {\sf R}$ using eq. (3.19) in \cite{JAMS}.  The Wightman function $G^+(\x, \x') = \ee^{-\theta(\x)} {\sf G}^+(\x, \x') \ee^{-\theta(\x')}$ uniquely satisfies
\begin{align}
\left(\Box_\x - \frac{1}{6} R(\x) \right) G^+(\x, \x') = \left(\Box_{\x'} - \frac{1}{6} R(\x') \right) G^+(\x, \x') = 0.
\end{align}
with initial data \eqref{HadInData}. Prop. \ref{Prop:Had} implies that $G^+$ has Hadamard form and is positive. Thus, eq. \eqref{HadInData} is Hadamard initial data.
\end{proof}

\subsection{Well-posedness of semiclassical gravity for conformally covariant fields in conformally static spacetimes}

Lemma \ref{LemHadInDat} allows us to now turn to the main result of this paper, which is the well-posedness of semiclassical gravity for a conformally covariant Klein-Gordon field in conformally static spacetimes.

\begin{thm}
\label{MainThm2}
Let $(M, {\sf g}_{ab})$ be a fixed smooth, static, globally hyperbolic spacetime with a global, non-vanishing, irrotational timelike Killing vector field $K^a = \partial_t^a$, 
and let $K^a$ be irrotational with respect to $\mathcal{C} \subset M$, a Cauchy surface of $(M, {\sf g}_{ab})$. Given a good initial data set \eqref{IVP} for the semiclassical gravity equations \eqref{semiEFEKG} with a conformally coupled Klein-Gordon field ($m^2 = 0$, $\xi = 1/6$) (in the sense of def. \eqref{def:GoodInitialData}), where the metric data is given in terms of the functions $\theta_0, \theta_1, \theta_2, \theta_3 \in C^\infty(\mathcal{C}, \mathbb{R})$ as 
\begin{subequations}
\label{gCauchyData}
\begin{align}
g_{ab}^{(0)} & = \ee^{2\theta_0} [ {\sf g}_{ab}]|_\mathcal{C}, \\ 
g_{ab}^{(1)}  & =  2 \theta_1 \ee^{2 \theta_0} [\g_{ab}]|_\mathcal{C}, \\
g_{ab}^{(2)} & = 2\left(2 \theta_1^2 + \theta_2\right) \ee^{2 \theta_0} [\g_{ab}]|_{\mathcal{C}}, \\ 
g_{ab}^{(3)} & = 2\left(4 \theta_1^3 + 6 \theta_1 \theta_2 + \theta_3 \right) \ee^{2 \theta_0} [\g_{ab}]|_{\mathcal{C}},
\end{align}
\end{subequations}
and where the where the Wightman two-point function data is Hadamard initial data constructed as in lemma \ref{LemHadInDat}, eq. \eqref{HadInData}, using the initial data $\theta_0$ and $\theta_1$ on $\mathcal{C}$ and data for a conformally related Klein-Gordon two-point function ${\sf G}^+$ in the static spacetime, there exists a neighbourhood of $\mathcal{C}$, $N_\mathcal{C} \subset M$, for which there is a unique function $\theta: N_\mathcal{C} \to \mathbb{R}$, such that the metric tensor
\begin{align}
g_{ab}  = \Theta^2 {\sf g}_{ab} = \ee^{2 \theta} {\sf g}_{ab}, 
\label{Conformalgthm2}
\end{align}
and the unique Wightman two-point function, $G^+$, obtained from the Hadamard initial data yield a solution to semiclassical gravity \eqref{semiEFEKG} in the globally hyperbolic spacetime $(N_\mathcal{C}, g_{ab})$ for generic values of $\alpha$ and $\beta$ (i.e., almost everywhere in the parameter space $\mathbb{R}^2$ endowed with its standard measurable space structure). Moreover, small perturbations of good initial data in the conformal class of $\g_{ab}$ lead to small perturbations in the solution.

\end{thm}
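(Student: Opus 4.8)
The plan is to use conformal covariance to peel the matter sector off the geometry and then recast the semiclassical Einstein equations as a well-posed hyperbolic problem for the conformal factor $\theta$, following the reduction of \cite{Noakes} and closing it with Leray's theorem \cite{Leray}. First I would settle the matter sector. Because the field is conformally coupled, the static-frame Wightman function $\mathsf{G}^+$ solves the $\theta$-independent equation \eqref{sfGSolve}; from the prescribed Hadamard data \eqref{StatHadInData} it is uniquely fixed on $N_\mathcal{C}$ by the static-spacetime constructions of \cite{Benito,JAMS}, and its coincidence limits $[\mathsf{w}]$, $[\mathsf{w}_a]$, $[\mathsf{w}_{ab}]$ become definite smooth functions there. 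Lemma \ref{LemHadInDat} and Prop. \ref{Prop:Had} then ensure that $G^+ = \ee^{-\theta}\mathsf{G}^+\ee^{-\theta}$ is a Hadamard bi-solution of \eqref{KG} in $(N_\mathcal{C}, g_{ab})$ for \emph{any} admissible $\theta$, so that \eqref{KG} holds automatically and, by \eqref{TabConformal}--\eqref{TabConfw}, the source $[\mathcal{T}_{ab}w]$ is an explicit second-order differential expression in $\theta$ with coefficients assembled from these fixed functions.

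Next I would substitute $g_{ab} = \ee^{2\theta}\mathsf{g}_{ab}$ into the trace--traceless split \eqref{ClassicalSystem}. Since $R[g]$ and $R_{ab}[g]$ are second order in $\theta$, the fourth-order content of \eqref{TraceEq}--\eqref{TraceFree} sits in the terms $\Box R$ and $\Box\tilde R_{ab}$. Mirroring \cite{Noakes}, I would adjoin auxiliary fields $\mathscr{R}$ and $\tilde{\mathscr{R}}_{ab}$ subject to $\mathscr{R}=R[\theta]$ and $\tilde{\mathscr{R}}_{ab}=\tilde R_{ab}[\theta]$; the scalar relation, being the conformal transformation law of the Ricci scalar solved for its principal part, is a second-order wave equation $\Box^{(\mathsf{g})}\theta = F(\theta,\nabla^{(\mathsf{g})}\theta,\mathscr{R})$ for $\theta$, and using both relations to eliminate the curvatures from the quadratic terms turns \eqref{TraceEq} and \eqref{TraceFree} into wave equations $\Box\mathscr{R}=\ldots$ and $\Box\tilde{\mathscr{R}}_{ab}=\ldots$. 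The triple $(\theta,\mathscr{R},\tilde{\mathscr{R}}_{ab})$ then obeys a quasi-linear, diagonal, second-order system whose principal parts are all given by a wave operator for the common null-cone structure of the conformal class, hence hyperbolic. This step needs the coefficients $\alpha$ and $2\alpha-6\beta+\tfrac{G_{\rm N}}{90\pi}$ of $\Box\tilde R_{ab}$ and $\Box R$ to be nonzero, which fails only on a measure-zero subset of the $(\alpha,\beta)$-plane -- precisely the ``generic'' proviso in the statement. Leray's theorem \cite{Leray} then yields a unique smooth solution on a neighbourhood $N_\mathcal{C}$ from the second-order Cauchy data for $(\theta,\mathscr{R},\tilde{\mathscr{R}}_{ab})$ that $\theta_0,\theta_1,\theta_2,\theta_3$ supply through \eqref{gCauchyData}, and its continuous dependence on the data furnishes the asserted stability.

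The hard part is to confirm that this solution is a genuine solution of the original tensorial system \eqref{SGEFE1} with $g_{ab}=\ee^{2\theta}\mathsf{g}_{ab}$, i.e. that the defining relations $\mathscr{R}=R[\theta]$ and $\tilde{\mathscr{R}}_{ab}=\tilde R_{ab}[\theta]$ propagate off $\mathcal{C}$. The scalar relation is built into the $\theta$-equation, and the four normal projections of \eqref{SGEFE1} are the constraints \eqref{ConstrAn2}, which hold on $\mathcal{C}$ by the good-initial-data hypothesis and are carried off $\mathcal{C}$ by the contracted Bianchi identity $\nabla^a(H_{ab}-8\pi G_{\rm N}[\mathcal{T}_{ab}w])=0$, exactly as in \cite{Osorio-Morales}. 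The delicate point is the nine-component relation $\tilde{\mathscr{R}}_{ab}=\tilde R_{ab}[\theta]$: since the conformal class carries only the single metric degree of freedom $\theta$, its preservation amounts to the assertion that the traceless equations \eqref{TraceFree} impose nothing on $\theta$ beyond \eqref{TraceEq} and the constraints once $g_{ab}=\ee^{2\theta}\mathsf{g}_{ab}$. To establish this I would derive a homogeneous second-order evolution equation for the constraint defect $\tilde{\mathscr{R}}_{ab}-\tilde R_{ab}[\theta]$ with vanishing Cauchy data and conclude by uniqueness; the mechanism I expect to drive the cancellation is the conformal covariance of the conformally coupled theory (its anomaly being purely tracial, the traceless part of $[\mathcal{T}_{ab}w]$ and the Weyl--Bach contribution to $H_{ab}$ carry matching conformal weight), together with the Bianchi identities and the irrotational static structure of $\mathsf{g}_{ab}$, which should force the non-conformally-invariant $\beta R^2$-type terms to recombine. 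Carrying out this constraint-propagation computation is the central obstacle of the argument.
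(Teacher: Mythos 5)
Your overall strategy coincides with the paper's: fix $\mathsf{G}^+$ in the static frame so that the matter sector decouples, recast \eqref{ClassicalSystem} with $g_{ab}=\ee^{2\theta}\g_{ab}$ as a Noakes-type system for $(\theta,\mathscr{R},\tilde{\mathscr{R}}_{ab})$ with the conformal transformation law of the Ricci scalar supplying the wave equation for $\theta$, and close with Leray's theorem; the genericity condition on $(\alpha,\beta)$ you identify is exactly the right one. However, there is a genuine gap at the point you yourself flag as ``the central obstacle'': the propagation of $\tilde{\mathscr{R}}_{ab}=\tilde R_{ab}[\theta]$ is asserted as an expectation, with a heuristic mechanism (conformal covariance of the matter sector forcing the traceless equations to be redundant) that is not the one that actually does the work. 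The paper resolves this by a direct adaptation of Noakes's harmonic-gauge argument, which does not rely on conformal covariance at this stage: one introduces the harmonic potential $F^\mu$ of \eqref{HarmonicPotential} and the tensor $\Delta_{ab}$ measuring the defect, derives the coupled hyperbolic system \eqref{FmuConstr}--\eqref{DeltaConstr} for $F^\mu$ and $\Delta_\alpha{}^\beta{}_{;\beta}$ from the contracted Bianchi identities, and shows that vanishing initial data for both (the data for $\Delta_\alpha{}^\beta{}_{;\beta}$ vanishing because the curvature data is matched to the $\theta$-data, and $F^\mu|_{\mathcal{C}}=\dot F^\mu|_{\mathcal{C}}=0$ being achievable by a coordinate choice for $\g_{\mu\nu}$ together with the generalised Gauss--Hamilton constraints) propagates, whence $\mathscr{R}=R$ and $\tilde{\mathscr{R}}_{ab}=\tilde R_{ab}$ throughout. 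Without carrying out this step your argument establishes well-posedness only of the auxiliary system, not of semiclassical gravity.

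A secondary, more technical omission: the system you describe is not yet diagonal second-order hyperbolic as written, because the traceless equation contains $\mathscr{R}_{;ab}$ (second derivatives of $\mathscr{R}$ outside the principal part $\Box\tilde{\mathscr{R}}_{ab}$) and every curvature term, as well as $[\mathcal{T}_{ab}w_\ell]$ via \eqref{TabConfw}, contains second derivatives of $\theta$. The paper must therefore enlarge the unknowns to include $v_a=\nabla^{(\g)}_a\theta$, $w_{ab}=\nabla^{(\g)}_a\nabla^{(\g)}_b\theta$ and $\mathscr{V}_a=\nabla^{(\g)}_a\mathscr{R}$, each obeying the differentiated equations, before Leray's theorem in the quasilinear diagonal form applies; this enlargement (spelled out in the system \eqref{ExtendedSystem} and app.~\ref{app}) is needed for the argument to go through and should not be elided.
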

\begin{proof}
As seen in lemma \ref{LemHadInDat}, using the functions $\theta_0$ and $\theta_1$, there is a one-to-one correspondence between the Hadamard initial data of $G^+$ and that of a Wightman function ${\sf G}^+$ that obeys eq. \eqref{sfGSolve}, which is of Hadamard form in the static spacetime $(M, \g_{ab})$. There is furthermore a one-to-one relation between $G^+$ as a state in a neighbourhood of the Cauchy surface, $N_\mathcal{C}$, and ${\sf G}^+$ restricted (more precisely, pulled back) to the same neighbourhood, provided that $\theta$ can be uniquely determined in $N_\mathcal{C}$. Moreover, in this case $G^+$ will be of Hadamard form by prop. \ref{Prop:Had}.

Thus, the goal of the proof is to show that such unique $\theta:N_\mathcal{C} \to \mathbb{R}$ exists by the semiclassical Einstein equations, given the fourth-order initial data $\theta_0, \ldots, \theta_3 \in C^\infty(\mathcal{C}, \mathbb{R})$.

To this end, following \cite{Noakes}, we split eq. \eqref{semiEFE} into trace and traceless equations, cf. eq. \eqref{ClassicalSystem}, and introduce the variables $\mathscr{R}$ and $\tilde{\mathscr{R}}_{ab}$ as discussed above, whereby by setting $g_{ab} = \ee^{2 \theta} \g_{ab}$ we write the system \eqref{semiEFE} as
\begin{subequations}
\label{TraceTracelessThm}
\begin{align}
&  \Box \mathscr{R}  - \left( 2 \alpha  - 6 \beta + \frac{ G_{\rm N}}{90 \pi} \right)^{-1} \left[ -  2 \alpha \left( \tilde{\R}^{cd} + \frac{1}{4} \R \ee^{-2 \theta}{\sf g}^{cd} \right) R_{cda}{}^a   +  2 \beta \R^2  -\frac{ G_{\rm N}}{360 \pi}     R_{cdef} R^{cdef} \right. \nonumber \\
 & \left. + \left( 2 \alpha + \frac{ G_{\rm N}}{360 \pi} \right) \left( \tilde{\R}^{cd} + \frac{1}{4} \R \ee^{-2 \theta} {\sf g}^{cd} \right) \left( \tilde{\R}_{cd} + \frac{1}{4} \R \ee^{2 \theta} {\sf g}_{cd} \right)  -  \R  + 4 \Lambda \right] = 0, \label{TraceEqThm} \\
& - \alpha \Box \tilde{\R}_{ab} + (\alpha - 2 \beta) \R_{;ab}   -  2 \alpha \left( \tilde{\R}^{cd} + \frac{1}{4} \R \ee^{-2 \theta} {\sf g}^{cd} \right) \left(R_{cdab} - \frac{1}{4} \ee^{2 \theta} {\sf g}_{ab} R_{cde}{}^e \right) + 2 \beta \R \tilde{\R}_{ab}  +  \tilde{\R}_{ab}  \nonumber \\
& - \frac{1}{4} (\alpha - 2 \beta)    \ee^{2 \theta} {\sf g}_{ab} \left( 2 \alpha  - 6 \beta + \frac{ G_{\rm N}}{360 \pi} \right)^{-1} \left[ -  2 \alpha \left( \tilde{\R}^{cd} + \frac{1}{4} \R \ee^{- 2 \theta} {\sf g}^{cd} \right) R_{cda}{}^a -\frac{ G_{\rm N}}{360 \pi}     R_{cdef} R^{cdef} \right. \nonumber \\
 & \left. + \left( 2 \alpha + \frac{ G_{\rm N}}{360 \pi} \right) \left( \tilde{\R}^{cd} + \frac{1}{4} \R \ee^{-2 \theta} {\sf g}^{cd} \right) \left( \tilde{\R}_{cd} + \frac{1}{4} \R \ee^{2 \theta} {\sf g}_{cd} \right)  -  \R  + 4 \Lambda \right] - 8 \pi G_{\rm N} \left(   [\mathcal{T}_{ab} w_\ell]- \frac{1}{4} {\sf g}_{ab} {\sf g}^{cd}[\mathcal{T}_{cd} w_\ell] \right) = 0. \label{TracelessEqThm}
\end{align}

In eq. \eqref{TraceEq} and \eqref{TracelessEqThm} the Riemann tensor and its contractions are seen as functionals of $\theta$, $\nabla_a^{(\sf g)} \theta$ and $\nabla_{a}^{(\sf g)} \nabla_{b}^{(\sf g)} \theta$ and of the background fields ${\sf g}_{ab}$ and ${\sf R}_{abcd}$. See e.g. eq. (D.7) in \cite{WaldGR}.

The second order initial data of $\R$ and $\tilde{\R}_{ab}$ will be constrained to match the fourth order initial data of $\theta$, such that for the solutions $\R = R$ and $\tilde{\R}_{ab} = \tilde{R}_{ab}$. 
Note that eq. \eqref{TraceEqThm} is independent of the state -- the only contribution to the stress-energy tensor comes from the trace anomaly and the geometric ambiguities proportional to $\alpha$ and $\beta$. Using formula \eqref{TabConfw} the last term on the left-hand side of eq. \eqref{TraceEqThm} can be seen as a functional of $\theta$, $\nabla_a^{(\sf g)} \theta$ and $\nabla_{a}^{(\sf g)} \nabla_{b}^{(\sf g)} \theta$ and the background fields ${\sf g}_{ab}$, its derivatives, $[{\sf w}]$, $[{\sf w}_a]$ and $[{\sf w}_{ab}]$ together with their covariant derivatives, which are computed from the known Wightman function ${\sf G}^+$ in static spacetime. In addition to eq. \eqref{TraceEqThm} and \eqref{TracelessEqThm}, $\theta$ must satisfy the equation
\begin{align}
 6 \Box^{(\sf g)} \theta +  6 {\sf g}^{ab} \nabla_a^{(\sf g)} \theta \nabla_b^{(\sf g)} \theta + \ee^{2 \theta} \R - {\sf R} = 0,   
 \label{BoxThetaeq}
\end{align}
\end{subequations}
which connects the Ricci scalars in the two conformally related spacetimes whenever $\R = R$.

The system of equations \eqref{TraceTracelessThm} can be cast in the form of a quasilinear, diagonal, second-order hyperbolic one adapting the techniques introduced in \cite{Noakes}. We define $v_{a} := \nabla_a^{(\sf g)} \theta$ and $w_{ab} := \nabla_a^{(\sf g)} \nabla_a^{(\sf g)} \theta$. These variables obey the once- and twice-covariantly-differentiated eq. \eqref{BoxThetaeq} respectively. Defining further $\mathscr{V}_a := \nabla_a^{(\g)} \R$, which obeys the covariantly differentiated eq. \eqref{TraceEqThm}, the system of eq. \eqref{TraceTracelessThm} can be extended into the system
\begin{subequations}
\label{ExtendedSystem}
\begin{align}
   \Box^{({\sf g})} \mathscr{R}  + \mathscr{A} &= 0, \\
 \Box^{({\sf g})} \mathscr{V}_a  + \mathscr{A}_a &= 0, \\
  \Box^{({\sf g})} \tilde{\R}_{ab} + \mathscr{B}_{ab} &= 0, \\
 6 \Box^{(\sf g)} \theta +  \mathscr{C} &= 0, \\
 6 \Box^{(\sf g)} v_a +  \mathscr{C}_a &= 0, \\
 6 \Box^{(\sf g)} w_{ab} +  \mathscr{C}_{ab} &= 0,
 \end{align}
\end{subequations}
where $\mathscr{A}$, $\mathscr{A}_a$, $\mathscr{B}_{ab}$, $\mathscr{C}$, $\mathscr{C}_a$ and $\mathscr{C}_{ab}$ depend at most on $\theta$, $v_a$, $w_{ab}$, $\tilde{\R}$, $\tilde{\R}_{ab}$, $\mathscr{V}_a$ and up to their first order derivatives in time, and on the background structure ${\sf g}_{ab}$, $[{\sf w}]$, $[{\sf w}_a]$, $[{\sf w}_{ab}]$ and their derivatives. The details of the system \eqref{ExtendedSystem} appear in app. \ref{app}. Leray's theorem \cite{Leray} guarantees that the system \eqref{ExtendedSystem} is well posed given initial data and that small perturbations of initial data lead to small perturbations in the solutions. 

All that is left is to constraint $\R$ and $\tilde{\R}_{ab}$ such that they coincide with $R$ and $R_{ab} - (1/4) R g_{ab}$ (resp.), as computed from $g_{ab} = \ee^{2 \theta} {\sf g}_{ab}$. Noakes has shown how to guarantee that this holds in the harmonic gauge for higher-derivative gravity \cite{Noakes}, and the case at hand is a straightforward adaptation of that argument. The symmetric tensor
\begin{align}
\Delta_{ab} := R_{ab} - \frac{1}{2} R \ee^{2 \theta} \g_{ab} - \tilde{\R}_{ab} + \frac{1}{4} \ee^{2 \theta} \g_{ab} \R,
\end{align}
should vanish for physical solutions. Introducing (locally) the harmonic potential,
\begin{align}
F^{\mu} := 
- \frac{1}{2} \ee^{-4 \theta} {\sf g}^{\alpha \beta} \g^{\mu \gamma} \left(\partial_\alpha  (\ee^{2 \theta} \g_{\beta \gamma}) + \partial_\beta (\ee^{2 \theta} \g_{\alpha \gamma}) - \partial_\gamma (\ee^{2 \theta}\g_{\alpha \beta}) \right),
\label{HarmonicPotential}
\end{align}
where Greek indices denote coordinates (as opposed to abstract indices), it holds in harmonic coordinates that \cite[Eq. (45)]{Noakes}
\begin{align}
\frac{1}{2} \ee^{-2 \theta} \g^{\alpha \beta} \partial_\alpha \partial_\beta F^{\mu} + p^\mu - \ee^{-2 \theta} \g^{\mu \alpha} \Delta_{\alpha}{}^\beta{}_{;\beta} = 0,
\label{FmuConstr}
\end{align}
where $p^\mu$ depends in up to first order derivatives of the variables, and
\begin{align}
g^{\gamma \delta} \partial_\gamma \partial_\delta \Delta_{\alpha}{}^\beta{}_{;\beta} + q_\alpha + L_\alpha = 0,
\label{DeltaConstr}
\end{align}
where $L_\alpha$ also depends in up to first order derivatives of the variables and $q_\alpha$ up to first order derivatives in all variables, except for $F^\mu$, including terms of the form $F^\mu{}_{;\alpha \beta}$. 

The system \eqref{FmuConstr} and \eqref{DeltaConstr} for $\Delta_{\alpha}{}^\beta{}_{;\beta}$ and $F^\mu$ can be enlarged into a second order, quasi-linear, diagonal hyperbolic system by introducing $G^\mu_\nu := \partial_\nu F^\mu{}$, which obeys the differentiated version of eq. \eqref{FmuConstr} as field equation. Thus, by Leray's theorem, choosing vanishing initial data for $F^\mu$ and $\Delta_{\alpha}{}^\beta{}_{;\beta}$, eq. \eqref{FmuConstr} and \eqref{DeltaConstr} guarantee that $\R = R$ and $\tilde{\R}_{\mu \nu} = R_{\mu \nu} - (1/4) R g_{\mu \nu}$ throughout the solution. The vanishing initial data for eq. \eqref{DeltaConstr}, $\Delta_{\alpha}{}^\beta{}_{;\beta} = \dot{\Delta}_{\alpha}{}^\beta{}_{;\beta} = 0$ imposes eight constraints, which are consistently satisfied by construction when the data for $\R$ and $\tilde{\R}_{ab}$ matches the data of $R$ and $R_{ab} - (1/4) R g_{ab}$ in terms of the data of $\theta$ on $\mathcal{C}$, as we demand. Additionally, as usual, the eight constraints corresponding to the vanishing of initial data for the harmonic potential are tantamount to the existence of harmonic coordinates, which is guaranteed by the freedom to choose coordinates for the background metric components $\g_{\mu \nu}$, such that on the initial surface $F^\mu|_{\mathcal{C}} = 0$ and, as usual, that $\dot F^\mu|_{\mathcal{C}} = 0$ follows from $F^\mu|_{\mathcal{C}} = 0$ and the generalised Gauss and Hamilton constraints on $\mathcal{C}$.
\end{proof}

The validity of the above theorem relies on the existence of solutions to the constraint equations of semiclassical gravity discussed in sec. \eqref{subsec:Constraints}. It is therefore appropriate to give at least one example of initial data that satisfies these constraints. 

\begin{ex}[de Sitter-like data on the Cauchy surface] Consider that for the geometrical sector the initial data matches induced data from de Sitter spacetime, which is conformal to Minkowski spacetime with conformal factor $\ee^{2 \theta} = \alpha^2/\eta^2$. Thus, we have that the background metric is the flat metric
\begin{align}
\g_{ab} = -\dd \eta_a \otimes \dd \eta_b + \dd x_a \otimes \dd x_b + \dd y_a \otimes \dd y_b + \dd z_a \otimes \dd z_b
\end{align}
and on the Cauchy surface $\mathcal{C}$, defined at $\eta = \eta_0$, we have as data the real, smooth functions
\begin{align}
\theta_0 = \ln(\alpha/\eta_0), & \quad & \theta_1 = -1/\eta_0, & \quad & \theta_1 = 1/\eta_0^2, & \quad & \theta_1 = -2/\eta_0^3. 
\label{thetadataex}
\end{align} 

For the data of the two-point function we can use the correlation functions of a massless, conformally coupled Klein-Gordon field in the Bunch-Davies vacuum \cite{Bunch:1977sq},
\begin{subequations}
\begin{align}
G^+_{\varphi \varphi} & = \frac{1}{16 \pi \alpha^2} {}_2F_1\left(7/2,5/2;2;1 - \frac{ (\Delta \underline{x})^2 + \epsilon^2}{4 \eta_0^2}\right), \\
G^+_{\pi \varphi} & = \frac{35 \left( (\Delta \underline{x})^2 - 2 \ii \epsilon \eta_0 \right) \, _2F_1\left(7/2,9/2;3;1-  \frac{ (\Delta \underline{x})^2+ \epsilon^2}{4 \eta_0^2}\right)}{514 \pi  \alpha \eta_0^4}, \\
G^+_{\pi \varphi} & = \frac{35 \left( (\Delta \underline{x})^2 + 2 \ii \epsilon \eta_0 \right) \, _2F_1\left(7/2,9/2;3;1-  \frac{ (\Delta \underline{x})^2+ \epsilon^2}{4 \eta_0^2}\right)}{514 \pi  \alpha \eta_0^4}, \\
G^+_{\pi \pi} & = \frac{35}{8192 \pi \eta_0^8}  \left(-16 \eta_0^2 \left(2 \eta_0^2+(\Delta \underline{x})^2+\epsilon ^2\right) \, _2F_1\left(7/2,9/2;3;1- \frac{ (\Delta \underline{x})^2+ \epsilon^2}{4 \eta_0^2}\right) \right. \nonumber \\
& \left. +21 \left(\eta_0^2+(\Delta \underline{x})^2+(\epsilon -\ii \eta_0^2\right) \left(\eta_0^2+(\Delta \underline{x})^2+(\epsilon +\ii \eta_0)^2\right) \, _2F_1\left(\frac{9}{2},\frac{11}{2};4;1-\frac{ (\Delta \underline{x})^2+ \epsilon^2}{4 \eta_0^2}\right)\right).
\end{align}
\label{G+dataex}
\end{subequations}

(The detailed form of the data ${\sf G}^+_{\varphi \varphi}$, ${\sf G}^+_{\pi \varphi}$, ${\sf G}^+_{\varphi \pi}$ and ${\sf G}^+_{\pi \pi}$ can be obtained from \eqref{thetadataex} and \eqref{G+dataex}.)




Setting $\alpha = \frac{45 \pm \sqrt{15(135+4 G_{\rm N} \Lambda})}{30 \Lambda}$ one obtains a solution to the constraints of semiclassical gravity of sec. \ref{subsec:Constraints} and this constitutes a good set of initial data for semiclassical gravity for which theorem 
\ref{MainThm2} applies.
\end{ex}

\section{Final remarks}
\label{sec:final}


In this work we have been concerned studying solutions to semiclassical gravity with spacetime dependence. While results are already known in cases in which the solutions depend on a time parameter, notably in cosmological contexts, or where the solutions have non-trivial space-dependence, but are static, solutions that depend on both space and time had been lacking. To the best of our knowledge, the present work gives the first structural results in this context.
 
Theorem \ref{MainThm2} shows that the theory of semiclassical gravity for conformally static spacetimes is well posed with a conformally covariant Klein-Gordon field. This includes the well-posedness of semiclassical cosmology even in the case in which the scale factor contains anisotropies, provided that appropriate solutions for the constraint equations can be found. The task of solving the constraint equations in cases of interest might be feasible to specialists in numerical methods, such as numerical relativists. In the case of cosmology it might be also possible, and considerably simpler, to obtain results for anisotropic perturbations in cosmology using a combination of numerical and analytic techniques.  In any case, cosmological applications are an interesting avenue to explore in their own right in the present context. 

A natural extension of this work including the Maxwell field should yield a mathematically sound formalism for studying radiation-dominated cosmology.


\section*{Acknowledgments}

BAJ-A is supported by a CONACYT Postdoctoral Research Fellowship and acknowledges additional support from UNAM-DGAPA-PAPIIT grant IG100120. SM acknowledges the support CONACYT project A1S33440. BAJ-A and SM acknowledge the support of CONACYT project 140630. 

\appendix

\section{Expresions in the proof of theorem \ref{MainThm2}}
\label{app}

We detail the expressions appearing in the system \eqref{ExtendedSystem}. In the expressions below by the Riemann tensor we denote
\begin{align}
R_{abc}{}^d = {\sf R}_{abc}{}^d + 2 \delta^d{}_{[a} w_{|c|b]} - 2 \g^{de} \g_{c[a}  w_{|e|b]} + 2 v_{[a} \delta^d_{b]} v_c - 2 v_{[a} \g_{b]c} \g^{df} v_f - 2 \g_{c[a} \delta^d_{b]} \g^{ef} v_e v_f,
\label{RiemannConf}
\end{align}
which depends on the background structure and on $v_a$ and $w_{ab}$. It follows from \eqref{RiemannConf} that $\nabla_e^{(\g)}R_{abc}{}^d$ depends only on the background structure, $v_a$, $w_{ab}$, $\nabla_e^{(\g)}v_a$ and $\nabla_e^{(\g)} w_{ab}$, i.e., it contains no second-order derivatives of the dynamical variables. We have

\begin{align}
\mathscr{A} & = - \g^{ab} \left(2 \delta^c_{(a} v_{b)} - \g_{ab} \g^{cd} v_d \right) \mathscr{V}_c  - \ee^{2 \theta} \left( 2 \alpha  - 6 \beta + \frac{ G_{\rm N}}{90 \pi} \right)^{-1} \left[ -  2 \alpha \left( \tilde{\R}^{cd} + \frac{1}{4} \R \ee^{-2 \theta}{\sf g}^{cd} \right) R_{cda}{}^a   +  2 \beta \R^2 \right. \nonumber \\
 & \left. -\frac{ G_{\rm N}}{360 \pi}     R_{cdef} R^{cdef} + \left( 2 \alpha + \frac{ G_{\rm N}}{360 \pi} \right) \left( \tilde{\R}^{cd} + \frac{1}{4} \R \ee^{-2 \theta} {\sf g}^{cd} \right) \left( \tilde{\R}_{cd} + \frac{1}{4} \R \ee^{2 \theta} {\sf g}_{cd} \right)  -  \R  + 4 \Lambda \right], \\
\mathscr{A}_a & = \g^{cd} {\sf R}_{acd}{}^e \mathscr{V}_e + \nabla_a \mathscr{A}, \\
\mathscr{B}_{ab} & = \mathscr{D}_{ab} - \frac{\ee^{2 \theta}}{\alpha} \left[(\alpha - 2 \beta) \left(\nabla_a^{(\g)} \mathscr{V}_b - (2 \delta^c{}_{(a}v_{b)} - \g_{ab} \g^{cd} v_d) \mathscr{V}_c \right)   -  2 \alpha \left( \tilde{\R}^{cd} + \frac{1}{4} \R \ee^{-2 \theta} {\sf g}^{cd} \right) \left(R_{cdab} - \frac{1}{4} \ee^{2 \theta} {\sf g}_{ab} R_{cde}{}^e \right)   \nonumber \right.\\
& + 2 \beta \R \tilde{\R}_{ab}  +  \tilde{\R}_{ab} - \frac{1}{4} (\alpha - 2 \beta)    \ee^{2 \theta} {\sf g}_{ab} \left( 2 \alpha  - 6 \beta + \frac{ G_{\rm N}}{360 \pi} \right)^{-1} \left[ -  2 \alpha \left( \tilde{\R}^{cd} + \frac{1}{4} \R \ee^{- 2 \theta} {\sf g}^{cd} \right) R_{cda}{}^a -\frac{ G_{\rm N}}{360 \pi}     R_{cdef} R^{cdef} \right. \nonumber \\
 & \left. \left. + \left( 2 \alpha + \frac{ G_{\rm N}}{360 \pi} \right) \left( \tilde{\R}^{cd} + \frac{1}{4} \R \ee^{-2 \theta} {\sf g}^{cd} \right) \left( \tilde{\R}_{cd} + \frac{1}{4} \R \ee^{2 \theta} {\sf g}_{cd} \right)  -  \R  + 4 \Lambda \right] - 8 \pi G_{\rm N} \left(   [\mathcal{T}_{ab} w_\ell]- \frac{1}{4} {\sf g}_{ab} {\sf g}^{cd}[\mathcal{T}_{cd} w_\ell] \right) \right], \\
  \mathscr{C}_a & =   6 \g^{cd} {\sf R}_{acd}{}^b v_b + 6 \g^{bc} (v_b \nabla_a^{(\g)}  v_c + v_c \nabla_a^{(\g)}  v_b) + 2 \ee^{2 \theta} v_a \R + \ee^{2 \theta} \nabla_a^{(\g)} \R - \nabla_a^{(\g)} {\sf R}, \\
  \mathscr{C}_{ab} & =  6 \g^{cd} (v_e \nabla_c^{(\g)} {\sf R}_{bda}{}^e  +  {\sf R}_{bda}{}^e \nabla_c^{(\g)} v_e) + 6 \g^{cd} ({\sf R}_{bcd}{}^e\nabla_e^{(\g)} v_a + {\sf R}_{bca}{}^e \nabla_d^{(\g)} v_e )\nonumber \\
 & + \nabla_b^{(\g)} \Big[ 6 \g^{cd} {\sf R}_{acd}{}^e  v_e+ 6 \g^{cd} (v_d w_{ca} + v_c  w_{da}) + 2 \ee^{2 \theta} v_a \R + \ee^{2 \theta} \mathscr{V}_a - \nabla_a^{(\g)} {\sf R}\Big],
\end{align}
where $\mathscr{D}_{ab}$ can be written in terms of $C^c{}_{ab} := 2 \delta^c{}_{(a} v_{b)} - \g_{ab} \g^{cd} v_d$ 
as
\begin{align}
\mathscr{D}_{ab} 
& = - \g^{cd} C^e{}_{cd} \nabla_e^{(\g)} \tilde{\R}_{ab} - \g^{cd} C^e{}_{ca} \nabla_d^{(\g)} \tilde{\R}_{eb} - \g^{cd} C^e{}_{cb}  \nabla_d^{(\g)} \tilde{\R}_{ae} \nonumber \\
& -  \g^{cd} (\nabla_c^{(\g)} C^e{}_{da} + C^e{}_{cf} C^f{}_{da} - C^f{}_{cd} C^e{}_{fa} - C^f{}_{ca} C^e{}_{df}) \tilde{\R}_{eb}  - \g^{cd} C^e{}_{da} (\nabla_c^{(\g)} \tilde{\R}_{eb} - C^f{}_{ce} \tilde{\R}_{fb} - C^f{}_{cb} \tilde{\R}_{ef}) \nonumber \\
& -  \g^{cd} (\nabla_c^{(\g)} C^e{}_{db} + C^e{}_{cf} C^f{}_{db} - C^f{}_{cd} C^e{}_{fb} - C^f{}_{cb} C^e{}_{df}) \tilde{\R}_{ea}  - \g^{cd} C^e{}_{db} (\nabla_c^{(\g)} \tilde{\R}_{ea} - C^f{}_{ce} \tilde{\R}_{fa} - C^f{}_{ca} \tilde{\R}_{ef})
\end{align}
and setting $[\breve {\sf w}_{ab}](\x) := \lim_{\x' \to \x} \nabla_b^{(\g)} \nabla_a^{(\g)}  [{\sf G}^+(\x, \x') - {\sf H}(\x, \x')]$ (note the difference in the covariant derivative with respect to the definition of $[{\sf w}_{ab}]$ in eq. \eqref{sfwab}),
\begin{align}
[\mathcal{T}_{ab} w_\ell] & = \frac{\ee^{2\theta}}{2(2 \pi)^2}\left[  \frac{1}{3} \left(v_a v_b + w_{ab} - C^c{}_{ab} v_c\right) [{\sf w}]  + v_a [{\sf w}_{b}] + v_b [{\sf w}_{a}] -  [\breve {\sf w}_{ab}] - C^c{}_{ab}[{\sf v}_c]  +\frac{1}{3} \left(  - 2 v_a [{\sf w}]_{;b} - 2 v_b [{\sf w}]_{;a}  \right. \right. \nonumber \\
& \left. \left. + \nabla_b^{(\g)} \nabla_a^{(\g)} [{\sf w}] -  C^c{}_{ab} [{\sf w}]_{;c} \right) + \frac{1}{2} \g_{ab} \g^{cd}\left( \left(\frac{1}{3}v_c v_d - \frac{2}{3}(w_{cd} - C^c{}_{ab} v_c) \right) [{\sf w}]  - v_c [{\sf w}_{d}] - v_d [{\sf w}_{c}]+  [\breve {\sf w}_{cd}]- C^e{}_{cd}[{\sf w}_e] \right. \right. \nonumber \\
& \left. -\frac{1}{6} \left( - 2 v_c [{\sf w}]_{;d} - 2 v_d [{\sf w}]_{;c} + \nabla_d^{(\g)} \nabla_c^{(\g)} [{\sf w}] - C^e{}_{cd} [{\sf w}]_{;e} \right) \right) \left. + \frac{1}{6} \left( \tilde{\R}_{ab} - \frac{1}{4} \ee^{2 \theta} \g_{ab} \R \right)[{\sf w}] \right].
\end{align}

\end{document}